\begin{document}
%
% paper title
\title{Efficient Implementation of Linear Programming Decoding}
\author{Mohammad H. Taghavi, Amin Shokrollahi, and Paul H. Siegel
\thanks{M. H. Taghavi and P. H. Siegel are with the Electrical and Computer Engineering Department, and the Center for Magnetic Recording Research, University of California, San Diego, La Jolla, CA 92093-0407 USA (e-mail: mtaghavi@ucsd.edu; psiegel@ucsd.edu).}
\thanks{A. Shokrollahi is with the School of Basic Sciences, and School of Computer Science and Communications, Ecole Polytechnique F\'ed\'erale de Lausanne (EPFL), 1015 Lausanne, Switzerland (e-mail: amin.shokrollahi@epfl.ch).}
\thanks{The material in this paper was presented in part at the 46th annual Allerton Conference on Communication, Control and Computing, Sep. 2008.}}

% \author{\authorblockN{Mohammad H. Taghavi\authorrefmark{1}, Amin Shokrollahi\authorrefmark{2}, and Paul H. Siegel\authorrefmark{1}}\\
% \authorblockA{\authorrefmark{1}ECE Department, University of California, San Diego\\
% Email: (mtaghavi, psiegel)@ucsd.edu\\}
% \authorblockA{\authorrefmark{2}School of Basic Sciences, and School of Computer Science and Communications, Ecole Polytechnique F\'ed\'erale de Lausanne (EPFL)\\
% Email: amin.shokrollahi@epfl.ch}}

% 
%
%

% make the title area
\maketitle

\begin{abstract}
Linear programming (LP) decoding, originally proposed by Feldman \emph{et al.} \cite{Feldman} as an approximation to the maximum-likelihood (ML) decoding of binary linear codes, solves a linear optimization problem formed by relaxing each of the finite-field parity-check constraints into a number of linear constraints. While providing a number of advantages over iterative message-passing (IMP) decoders, such as its amenability to finite-length performance analysis, LP decoding is computationally more complex to implement in its original form than IMP decoding, due to both the large size of the relaxed LP problem and the inefficiency of using general-purpose LP solvers. 

This paper explores ideas for fast LP decoding of low-density parity-check (LDPC) codes. We first show a number of properties of the LP decoder, and by modifying the previously reported Adaptive LP decoding scheme \cite{ALP ISIT} to allow removal of unnecessary constraints, we prove that LP 
decoding can be performed by solving a number of LP problems that contain at most one linear constraint derived from each of the parity-check constraints. Then, as a step toward designing an efficient LP solver that takes advantage of the particular structure of LDPC codes, we study a sparse interior-point method for solving this sequence of linear programs. Since the most complex part of each iteration of the interior-point algorithm is the 
solution of a (usually ill-conditioned) system of linear equations for finding the step direction, we propose a preconditioning algorithm to be used with the 
preconditioned conjugate-gradient method for solving such systems. The proposed preconditioning algorithm is similar to the encoding procedure of LDPC codes, and we demonstrate its effectiveness via both analytical methods and computer simulation results.
\end{abstract}

\IEEEpeerreviewmaketitle

\newtheorem{definition}{Definition}
\newtheorem{theorem}{Theorem}
\newtheorem{lemma}{Lemma}
\newtheorem{claim}{Claim}
\newtheorem{corollary}{Corollary}
\newtheorem{conjecture}{Conjecture}
\newtheorem{remark}{Remark}

\long\def\symbolfootnote[#1]#2{\begingroup%
\def\thefootnote{\fnsymbol{footnote}}\footnote[#1]{#2}\endgroup}

\section{Introduction}
Low-density parity-check (LDPC) codes \cite{Gallager} are becoming one of the dominant means of error-control coding in the transmission and storage of digital information. By combining randomness and sparsity, LDPC codes with large block lengths can correct errors using iterative message-passing (IMP) algorithms at coding rates that are closer to the capacity than any other class of practical codes \cite{RSU}. While the performance of IMP decoders for the asymptotic case of infinite lengths is studied extensively using probabilistic methods such as density evolution \cite{Capacity of LDPC}, the finite-length behavior of these algorithms, especially their error floors, are still not well-characterized. 

Linear programming (LP) decoding was proposed by Feldman \emph{et al.} \cite{Feldman} as an alternative to IMP decoding of LDPC and turbo-like codes. LP decoding approximates the maximum-likelihood (ML) decoding problem by a linear optimization problem via a relaxation of each of the finite-field parity-check constraints of the ML decoding into a number of linear constraints. Many observations suggest similarities between the performance of LP and iterative message-passing decoding methods \cite{Feldman}, \cite{LP and MSA}, \cite{Vontobel}. In fact,
the sum-product message-passing algorithm can be interpreted as a minimization of a nonlinear function, known as Bethe free energy, over the same feasible region as LP decoding \cite{Yedidia}, \cite{tree-reweighted}.

Due to its geometric structure, LP decoding seems to be more amenable than IMP decoding to finite-length analysis. In particular, the finite-length behavior of LP decoding can be completely characterized in terms of pseudocodewords, which are the vertices of the feasible space of the corresponding linear program. Another characteristic of LP decoding -- the \emph{ML certificate property} -- is that its failure to find an ML codeword is always detectable. More specifically, the decoder always gives either an ML codeword or a nonintegral pseudocodeword as the solution. On the other hand, the main disadvantage of LP decoding is its higher complexity compared to IMP decoding.

In order to make linear programming (LP) decoding practical, it is necessary to find efficient implementations that make its time complexity comparable to those of the message-passing algorithms. A conventional implementation of LP decoding is highly complex due to two main factors: (1) the large size of the LP problem formed by relaxation, and (2) the inability of general-purpose LP solvers to solve the LP efficiently by taking advantage of the properties of the decoding problem.

The standard formulation of LP decoding \cite{Feldman} has a size that grows very rapidly with the density of the Tanner graph representation of the code. Adaptive LP (ALP) decoding was proposed in \cite{ALP ISIT} to address this problem, reducing LP decoding to solving a sequence of much smaller LP problems.
The size of these LP problems has been observed in practice to be independent of the degree distribution, and more specifically, less than a small factor (less than two) times the number of parity checks. However, this observation has not been analytically explained. 

More recently, an equivalent formulation of the LP decoding problem was proposed in \cite{Chertkov} and \cite{Yang}, with a problem size growing linearly with both the code length and the maximum check node degrees. While this formulation requires solving only one LP, the overall complexity of this method in practice remains substantially higher than that of ALP decoding.

In this paper, we take some steps toward designing efficient LP solvers for LP decoding that exploit the inherent sparsity and structure of this particular class of problems. Our approach is based on a sparse implementation of interior-point algorithms. In an independent work, Vontobel studied the implementation and convergence of interior-point methods for LP decoding and mentioned a number of potential approaches to reduce its complexity \cite{Vontobel IPM}. It is also worth noting that a different line of work in this direction has been to apply iterative methods based on message-passing, instead of general LP solvers, to perform the optimization for LP decoding; e.g. see \cite{tree-reweighted} and \cite{Towards low-complexity}.

We first propose two modified versions of ALP decoding. The main idea behind these modifications is to adaptively remove a number of constraints at each iteration of ALP decoding, while adding new constraints to the problem. We prove a number of properties of these algorithms, which facilitate the design of a low-complexity LP solver. In particular, we show that the modified ALP decoders have the \emph{single-constraint property}, which means that they perform LP decoding by solving a series of linear programs that each contain at most one linear constraint from each parity check. 
An important consequence of this property is that the constraint matrices of the linear programs that are solved have a structure similar, in terms of the locations of their nonzero entries, to that of the parity-check matrix.

Then, we focus on the most complex part of each iteration of the interior-point algorithm, which is solving a system of linear equations to compute the Newton step. Since these linear systems become ill-conditioned as the interior-point algorithm approaches the solution, iterative methods that are often used for solving sparse systems, such as the conjugate-gradient (CG) method, perform poorly in the later iterations of the optimization. To address this problem, we propose a criterion for designing preconditioners that take advantage of the properties of LP decoding, along with a number of greedy algorithms to search for such preconditioners. The proposed preconditioning algorithms have similarities to the encoding procedure of LDPC codes, and we demonstrate their effectiveness via both analytical methods and computer simulation results.

The rest of this paper is organized as follows. In Section II, we review codes, LP decoding, and ALP decoding. In Section III, we propose some modifications in ALP decoding, and demonstrate a number of properties of ALP decoding and its variations. In Section IV, we review a class of the interior-point linear programming methods, as well as the preconditioned conjugate gradient (PCG) method for solving linear systems, with an emphasis on sparse implementation. In Section V, we introduce the proposed preconditioning algorithms to improve the PCG method for LP decoding. Some theoretical analysis and computer simulation results are presented in Section VI, and some concluding remarks are given in Section VII.

\section{LP Decoding}
\subsection{Notation}
Throughout the paper, we denote scalars and column vectors by lower-case letters ($a$), matrices by upper-case letters ($A$), and sets by calligraphic upper-case letters ($\mathcal{A}$). We write the $i$th element of a vector $a$ and the $(i,j)$th element of a matrix $A$ as $a_i$ and $A_{i,j}$, respectively.
The cardinality (size) of a finite set $\mathcal{A}$ is shown by $|\mathcal{A}|$. The support set (or briefly, support) of a vector $a$ of length $n$ is the set of locations $i\in\{1, \ldots, n\}$ such that $a_i \neq 0$. Similarly, the fractional support of a vector $a \in \mathbb{R}^n$ is the set of locations $i\in\{1, \ldots, n\}$ such that $a_i \notin \mathbb{Z}$.

%Let $\mathbb{F}_2$ denote the binary finite field $GF(2)$.
A binary linear code $\mathcal{C}$ of block length $n$ is a subspace of $\{0, 1\}^n$. This supspace can be defined as the null space (kernel) of a parity-check matrix $H \in \{0 , 1\}^{m \times n}$ in modulo-2 arithmetic. In other words,
\begin{equation}
\label{linear code}
\mathcal{C} = \big\{u \in \{0, 1\}^n \big| H x = 0 \mod 2\big\}.
\end{equation}
Hence, each row of $H$ corresponds to a binary parity-check constraint. The design rate of this code is defined as $R = 1- \frac{m}{n}$. In this paper, we assume that $H$ has full row rank (mod 2), in which case the design rate is the same as the rate of the code.

Given the $m \times n$ parity-check matrix, $H$, the code can also be described by a Tanner graph. The Tanner graph $\mathcal{T}$ is a bipartite graph containing $n$ \emph{variable nodes} (corresponding to the columns of $H$) and $m$ \emph{check nodes} (corresponding to the rows of $H$). We denote by $\mathcal{I} = \{1, \ldots, n\}$ the set of (indices of) variable nodes, and by $\mathcal{J} = \{1, \ldots, m\}$ the set of (indices of) check nodes. Variable node $i$ is connected to check node $j$ via an edge in the Tanner graph if $H_{j, i} = 1$. 

The neighborhood $\mathcal{N}(j)$ of a check (variable) node $j$ is the set of variable (check) nodes it is directly connected to via an edge, i.e., the support set of the $j$th row (column) of $H$. The degree $d_j$ of a node $j$, where the type of the node will be clear from the context, is the cardinality of its neighborhood. 
Let $\mathcal{S} \subseteq \mathcal{I}$ be a subset of the variable nodes. We call $\mathcal{S}$ a \emph{stopping set} if there is no check node in the graph that has exactly one neighbor in $\mathcal{S}$. Stopping sets characterize the termination of a belief propagation erasure decoder.

Each code can be equivalently represented by many different parity-check matrices and Tanner graphs. However, it is important to note that the performance of suboptimal decoders, such as message-passing or LP decoding, may depend on the particular choice of $H$ and $\mathcal{T}$. A low-density parity-check (LDPC) code is a linear code which has at least one sparse Tanner graph representation, where the average variable node and check node degrees do not grow with $n$ or $m$.

A linear program (LP)\footnote{Throughout the paper, we abbreviate the terms ``linear program'' and ``linear programming'' both as ``LP''.} of dimension $n$ is an optimization problem with a linear objective function and a feasible set (space) described by a number of linear constraints (inequalities or equations) in terms of $n$ real-valued variables. Each linear constraint in the LP defines a hyperplane in $n$-dimensional space. If the solution to an LP is bounded and unique, then it is at a vertex $v$ of the feasible space, on the intersection of at least $n$ such hyperplanes. Conversely, for any vertex $v$ of the feasible space of an LP, there exists a choice of the coefficients of the objective function such that $v$ is the unique solution to the LP.
% define:
% code, parity-check matrix $H$, constraint matrix $A$, support set (or briefly, support) of a vector (indices of nonzero value), fractional support, cardinality |.|, definition of pseudocodewords, the fact that each pseudocodewords is the solution to LP decoding with some LLR vector, we deal with column vectors
% 
\subsection{LP Relaxation of Maximum-Likelihood Decoding}
Consider a binary linear code $\mathcal{C}$ of length $n$. If a codeword $v \in \mathcal{C}$ is transmitted through a memoryless binary-input output-symmetric (MBIOS) channel, the ML codeword $u^{ML}$ given the received vector $ r \in \mathbb{R}^n$ is the codeword that maximizes the likelihood of observing $r$, i.e., 
\begin{equation}
\label{Maximum Likelihood}
u^{ML} = \mathop{\arg\max}\limits_{u \in \mathcal{C}} \Pr[r | u].
\end{equation}
For binary codes, this problem can be rewritten as the equivalent optimization problem
%\begin{eqnarray}
%\label{ML decoding}
%\text{minimize} && \gamma^T  u \nonumber\\
%\text{subject to} && u\in \mathcal{C},
%\end{eqnarray}

\begin{equation}
\label{ML decoding}
% \vspace{0.1in}
\textbf{ML Decoding} \hspace{0.2in}
\setlength{\nulldelimiterspace}{0pt}
\left.\begin{IEEEeqnarraybox}[\relax][c]{l's}
\text{minimize} \hspace{0.2in} \gamma^T  u \\
\text{subject to} \hspace{0.18in} u\in \mathcal{C},
\end{IEEEeqnarraybox}\right. \vspace{0.1in}
\end{equation}
where $\gamma$ is the vector of log-likelihood ratios (LLR) defined as
\begin{equation}
\label{def gamma}
\gamma_i= \log \frac{\Pr(r_i|u_i=0)}{\Pr(r_i|u_i=1)}.
\end{equation}

The ML decoding problem (\ref{ML decoding}) is an optimization with a linear objective function in the real domain, but with constraints that are nonlinear in the real space (although, linear in modulo-2 arithmetic). It is desirable to replace these constraints by a number of linear constraints, such that decoding can be performed using linear programming. The feasible space of the desired LP would be the convex hull of all the codewords in $\mathcal{C}$, which is called \emph{the codeword polytope}. Since a global minimum occurs at one of the vertices of the polytope, using this feasible space makes the set of potential (unique) solutions to the LP identical to the set of codewords in $\mathcal{C}$. 
Unfortunately, the number of constraints needed for this LP representation grows exponentially with the code length, therefore making this approach impractical.
As an approximation to ML decoding, Feldman \emph{et al.} proposed a relaxed version of this problem by first considering the convex hull of the local codewords defined by each row of the parity-check matrix, and then intersecting them to obtain what is known as the \emph{fundamental polytope}, $\mathcal{P}$ \cite{Vontobel}.
% This polytope has a number of integral and nonintegral vertices, but the integral vertices exactly correspond to the codewords of $\mathcal{C}$. 
% Therefore, the LP relaxation has the \emph{ML certificate property}, i.e., whenever LP decoding gives an integral solution, it is guaranteed to be an ML codeword. 

To describe the (projected) fundamental polytope, linear constraints are derived from a parity-check matrix as follows. For each row $j=1,\ldots,m$ of the parity-check matrix, i.e., each check node, the LP formulation includes the constraints
\begin{equation}
\label{constraints}
\sum_{i\in \mathcal{V}} u_i -\sum_{i\in \mathcal{N}(j)\backslash \mathcal{V}} u_i \leq |\mathcal{V}|-1,\ \ \forall\ V\subseteq \mathcal{N}(j)\ 
\textrm{such that}\ |\mathcal{V}|\ \textrm{is odd},
\end{equation}
which can be written in the equivalent form
\begin{equation}
\label{constraints2}
\sum_{i\in \mathcal{V}} (1-u_i) + \sum_{i\in \mathcal{N}(j)\backslash \mathcal{V}} u_i \geq 1,\ \ \forall\ \mathcal{V}\subseteq \mathcal{N}(j)\ 
\textrm{such that}\ |\mathcal{V}|\ \textrm{is odd}.
\end{equation}
We refer to the constraints of this form as \emph{parity inequalities}. If the variables $u_i$ are zeroes and ones, these constraints will be equivalent to the original binary parity-check constraints. To see this, note that if $\mathcal{V}$ is a subset of $\mathcal{N}(j)$, with $|\mathcal{V}|$ odd, and the corresponding parity inequality fails to hold, then all variable nodes in $\mathcal{V}$ must have the value 1, while those in $\mathcal{N}(j)\backslash \mathcal{V}$ must have the value 0. This implies that the corresponding vector $u$ does not satisfy parity check $j$. 
Conversely, if parity check $j$ fails to hold, there must be a subset of variable nodes $\mathcal{V}\subseteq \mathcal{N}(j)$ of odd size such that all nodes in $\mathcal{V}$ have the value 1 and all those in $\mathcal{N}(j)\backslash \mathcal{V}$ have the value 0. Clearly, the  corresponding parity inequality would be violated. Now, given this equivalence, we relax the LP problem by replacing each binary constraint, $u_i \in \{0,1\}$, by a \emph{box constraint}, $0\leq u_i \leq 1$. LP decoding can then be written as
%\begin{eqnarray}
%\label{LP decoding}
%\text{minimize} && \gamma^T  u \nonumber\\
%\text{subject to} && u\in \mathcal{P},
%\end{eqnarray}

\begin{equation}
\label{LP decoding}
% \vspace{0.1in}
\textbf{LP Decoding} \hspace{0.2in}
\setlength{\nulldelimiterspace}{0pt}
\left.\begin{IEEEeqnarraybox}[\relax][c]{l's}
\text{minimize} \hspace{0.2in} \gamma^T  u \\
\text{subject to} \hspace{0.18in} u\in \mathcal{P}.
\end{IEEEeqnarraybox}\right. \vspace{0.1in}
\end{equation}

\begin{lemma}[\cite{Feldman}, originally by \cite{Jeroslow}]
\label{convex hull}
For any check node $j$, the set of parity inequalities (\ref{constraints}) defines the convex hull of all $0-1$ assignments of the variables with indices in $\mathcal{N}(j)$ that satisfy the $j$th binary parity-check constraint.
\end{lemma}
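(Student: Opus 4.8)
The plan is to prove that every linear objective attains its minimum over the polytope at a valid $0$--$1$ assignment; a standard separation argument then forces the polytope to coincide with the convex hull of those assignments. Write $\mathcal{Q}_j$ for the polytope in $\mathbb{R}^{\mathcal{N}(j)}$ cut out by the parity inequalities (\ref{constraints}) together with the box constraints $0\le u_i\le 1$, and write $\mathcal{L}_j$ for the set of $0$--$1$ vectors indexed by $\mathcal{N}(j)$ that satisfy parity check $j$ (the local codewords of check $j$). By the equivalence established just before the lemma, every vector of $\mathcal{L}_j$ satisfies all the parity inequalities, so $\mathcal{L}_j\subseteq\mathcal{Q}_j$ and hence $\mathrm{conv}(\mathcal{L}_j)\subseteq\mathcal{Q}_j$ by convexity. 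For the reverse inclusion it suffices to show that for every cost vector $c\in\mathbb{R}^{\mathcal{N}(j)}$ the linear program $\min\{c^{T}u:u\in\mathcal{Q}_j\}$ attains its optimum at a point of $\mathcal{L}_j$: if not, some $p\in\mathcal{Q}_j$ would lie outside $\mathrm{conv}(\mathcal{L}_j)$, and a hyperplane separating $p$ from the finite point set $\mathcal{L}_j$ would provide a $c$ with $\min_{u\in\mathcal{Q}_j}c^{T}u\le c^{T}p<\min_{u\in\mathcal{L}_j}c^{T}u$, contradicting attainment of the optimum over $\mathcal{Q}_j$ in $\mathcal{L}_j\subseteq\mathcal{Q}_j$.

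Next I would fix $c$ and put $\mathcal{S}:=\{i\in\mathcal{N}(j):c_i<0\}$, with $0$--$1$ indicator vector $\chi_{\mathcal{S}}$. If $|\mathcal{S}|$ is even, then $\chi_{\mathcal{S}}\in\mathcal{L}_j$ already minimizes $c^{T}u$ over the box $[0,1]^{\mathcal{N}(j)}\supseteq\mathcal{Q}_j$, so the optimum is attained at $\chi_{\mathcal{S}}\in\mathcal{L}_j$. The substantive case is $|\mathcal{S}|$ odd, which I would handle by the change of variables $z_i:=1-u_i$ for $i\in\mathcal{S}$ and $z_i:=u_i$ for $i\notin\mathcal{S}$. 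This map is a bijection of the box onto itself; it rewrites the objective as $c^{T}u=c^{T}\chi_{\mathcal{S}}+\sum_{i}|c_i|\,z_i$; and a direct computation shows that it carries the parity inequality (\ref{constraints2}) for an odd forbidden set $\mathcal{V}$ into the parity inequality for the forbidden set $\mathcal{V}\triangle\mathcal{S}$ in the $z$ variables, which now has even size. As $\mathcal{V}$ runs over the odd subsets of $\mathcal{N}(j)$, $\mathcal{V}\triangle\mathcal{S}$ runs over the even subsets; the choice $\mathcal{V}=\mathcal{S}$ gives in particular the inequality $\sum_{i}z_i\ge 1$.

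It then remains to minimize $\sum_{i}|c_i|\,z_i$ over $z\ge 0$ subject, in particular, to $\sum_{i}z_i\ge 1$. With $i^{\ast}\in\arg\min_i|c_i|$ and $e_{i^{\ast}}$ the corresponding standard unit vector, the estimate $\sum_{i}|c_i|\,z_i\ge(\min_i|c_i|)\sum_{i}z_i\ge|c_{i^{\ast}}|$ gives a matching lower bound, and one checks directly that $z=e_{i^{\ast}}$ satisfies every even-forbidden-set inequality (each nonempty even forbidden set has at least two elements, which makes this immediate) as well as the box. Hence the minimum equals $|c_{i^{\ast}}|$ and is attained at $z=e_{i^{\ast}}$; translating back, $\min_{u\in\mathcal{Q}_j}c^{T}u=c^{T}\chi_{\mathcal{S}}+|c_{i^{\ast}}|$ is attained at $u^{\ast}=\chi_{\mathcal{S}\triangle\{i^{\ast}\}}$, which has even weight because $|\mathcal{S}|$ is odd, i.e., $u^{\ast}\in\mathcal{L}_j$. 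This completes the argument.

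The one step carrying any content is spotting the explicit minimizer together with the reflection $u\mapsto z$: it simultaneously linearizes the objective into $\sum_{i}|c_i|z_i$ and turns every odd forbidden set into an even one, so that $\sum_{i}z_i\ge 1$ becomes the only constraint that can be active at the optimum. The remaining ingredients --- the parity bookkeeping ($|\mathcal{V}\triangle\mathcal{S}|$ and $|\mathcal{S}\triangle\{i^{\ast}\}|$ are even), the verification that $e_{i^{\ast}}$ violates no other even-forbidden-set inequality, the separating-hyperplane reduction, and the degenerate entries $c_i=0$ --- are all routine.
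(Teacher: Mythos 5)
Your proof is correct. Note that the paper itself does not prove Lemma \ref{convex hull}; it is quoted from \cite{Feldman} (originally \cite{Jeroslow}), so there is no internal argument to compare against --- what you have written is a self-contained proof of the cited result, and it is essentially the standard one: reduce the polytope identity to showing that every linear objective attains its minimum over the polytope at an even-weight $0$--$1$ vector, then exhibit that minimizer explicitly. The steps check out: the separation reduction is sound because $\mathcal{Q}_j$ is compact and $\mathrm{conv}(\mathcal{L}_j)$ is the hull of a finite set; the reflection $z_i=1-u_i$ on $\mathcal{S}=\{i:c_i<0\}$ does map the odd-set inequality for $\mathcal{V}$ to the one for $\mathcal{V}\triangle\mathcal{S}$ (even, since $|\mathcal{S}|$ is odd) while turning the objective into $c^T\chi_{\mathcal{S}}+\sum_i|c_i|z_i$; and $e_{i^{*}}$ is feasible because every nonempty even forbidden set has size at least two, so the bound $|c_{i^{*}}|$ from $\sum_i z_i\ge 1$ is attained. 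Two remarks. First, you prove the statement for the parity inequalities \emph{together with} the box constraints (your $\mathcal{Q}_j$), which is the correct reading and matches Feldman's definition of the local codeword polytope: taken literally, the inequalities (\ref{constraints}) alone do not define the convex hull --- for a degree-$2$ check they give only the unbounded line $u_1=u_2$, and for a degree-$4$ check the point $\left(-\tfrac12,\tfrac12,\tfrac12,\tfrac12\right)$ satisfies all of (\ref{constraints}) while leaving the box --- so the box constraints cannot be dropped from the hypothesis, and Corollary \ref{redundant box} should be read with that in mind. Second, your explicit optimum (take $\chi_{\mathcal{S}}$ and, when $|\mathcal{S}|$ is odd, flip the coordinate with smallest $|c_i|$) is the cost-side analogue of Algorithm \ref{find parity cuts} in the paper, which flips the coordinate with $u_i$ closest to $\tfrac12$; recovering the same ``cheapest flip'' rule is a nice byproduct of your argument.
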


Since the convex hull of a set of vectors in $[0 , 1]^k$ is a subset of $[0 , 1]^k$, the set of parity inequalities for each check node automatically restrict all the involved variables to the interval $[0,1]$. Hence, we obtain the following corollary:
\begin{corollary}
\label{redundant box}
In the formulation of LP decoding above, the box constraints for variables that are involved in at least one parity-check constraint are redundant.
\end{corollary}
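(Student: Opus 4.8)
The plan is to derive the box constraints for such a variable directly from the parity inequalities attached to a single check node containing it, using Lemma \ref{convex hull}. Fix any variable index $i$ that is involved in at least one parity-check constraint, and let $j$ be a check node with $i \in \mathcal{N}(j)$; write $d_j = |\mathcal{N}(j)|$. It suffices to show that every $u \in \mathbb{R}^n$ satisfying all the parity inequalities (\ref{constraints}) associated with check $j$ automatically obeys $0 \le u_i \le 1$, since then the explicit box constraint on $u_i$ can be dropped from the LP without enlarging the feasible set $\mathcal{P}$.

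First I would invoke Lemma \ref{convex hull}: the parity inequalities for check $j$, viewed as constraints on the coordinates indexed by $\mathcal{N}(j)$, cut out exactly the convex hull $\mathcal{Q}_j \subseteq \mathbb{R}^{d_j}$ of the set of $0$--$1$ assignments of those coordinates that satisfy binary parity check $j$. Each such assignment is a vertex of the unit cube $[0,1]^{d_j}$, hence lies in $[0,1]^{d_j}$. Since $[0,1]^{d_j}$ is convex and contains all of these points, it contains their convex hull, i.e., $\mathcal{Q}_j \subseteq [0,1]^{d_j}$.

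It then follows that any $u$ feasible for LP decoding --- in particular, any $u$ satisfying the parity inequalities of check $j$ --- has its restriction $(u_\ell)_{\ell \in \mathcal{N}(j)}$ lying in $\mathcal{Q}_j \subseteq [0,1]^{d_j}$, so that $0 \le u_\ell \le 1$ for every $\ell \in \mathcal{N}(j)$, and in particular $0 \le u_i \le 1$. Thus the box constraint $0 \le u_i \le 1$ is implied by a subset of the parity inequalities and is therefore redundant; since $i$ was an arbitrary variable appearing in at least one check, the claim follows for all such variables.

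There is no serious obstacle here; the one point requiring a little care is the passage from the ``local'' convex hull $\mathcal{Q}_j$, living in the $d_j$ coordinates of $\mathcal{N}(j)$, to a statement about the global vector $u \in \mathbb{R}^n$, which is immediate once one observes that the parity inequalities for check $j$ constrain only the coordinates in $\mathcal{N}(j)$. It is also worth noting that the hypothesis ``involved in at least one parity-check constraint'' cannot be dropped: a variable appearing in no check has no parity inequality bounding it, so its box constraint is genuinely needed.
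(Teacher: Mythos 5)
Your proof is correct and follows essentially the same route as the paper: the paper's own justification is exactly that, by Lemma \ref{convex hull}, the parity inequalities of a check node describe the convex hull of the local $0$--$1$ assignments, and a convex hull of points in $[0,1]^{d_j}$ is contained in $[0,1]^{d_j}$, so every variable involved in that check is automatically confined to $[0,1]$. The only difference is that you make explicit the passage from the local polytope on the coordinates in $\mathcal{N}(j)$ to the global vector $u$, which the paper leaves implicit.
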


The fundamental polytope has a number of integral (binary-valued) and nonintegral (fractional-valued) vertices. The integral vertices, which satisfy all the parity-check equations as shown before, exactly correspond to the codewords of $\mathcal{C}$. Therefore, the LP relaxation has the \emph{ML certificate property}, i.e., whenever LP decoding gives an integral solution, it is guaranteed to be an ML codeword. On the other hand, if LP decoding gives as the solution one of the nonintegral vertices, which are known as \emph{pseudocodewords}, the decoder declares a failure.

%*************************************************************
\subsection{Adaptive Linear Programming Decoding}
In the original formulation of Feldman \emph{et al.} for LP decoding, the number of parity inequalities for each check node of degree $d_j$ is equal to the number of odd-sized subsets of its neighborhood, which is equal to $2^{d_j-1}$. Even for parity-check matrices of moderate row weights, this number can be very large. In \cite{ALP ISIT} a cutting-plane algorithm was proposed as an alternative to the direct implementation of LP decoding (\ref{LP decoding}). In this method, referred to as ``adaptive LP decoding'' (ALP decoding), a hierarchy of linear programs with the same objective function as in (\ref{LP decoding}) are solved, with the solution to the last program being identical to that of LP decoding. The first linear program in this hierarchy is made up of only $n$ box constraints, such that for each $i\in \{1, 2, \ldots, n\}$, we include the constraint
\begin{equation}
\label{initial constraints}
\setlength{\nulldelimiterspace}{0pt}
\left\{\begin{IEEEeqnarraybox}[\relax][c]{l's}
0\leq u_i \hspace{0.2in} \text{if}\ \ \gamma_i\geq0, \\
u_i\leq 1 \hspace{0.2in} \text{if}\ \ \gamma_i<0.
\end{IEEEeqnarraybox}\right.
\end{equation}
The solution to this initial problem corresponds to the result of an (uncoded) bit-wise hard decision based on the received vector.

\begin{algorithm}
\caption{ALP Decoding}
\label{adaptive LP}
\begin{algorithmic}[1]
% \STATE Initialize: $k \leftarrow 0$ and $\mathbf{flag} \leftarrow 0$;
\STATE Setup the initial LP problem with constraints from (\ref{initial constraints}), and $k \leftarrow 0$;
\STATE Find the solution $u^0$ to the initial LP problem by bit-wise hard decision;
\REPEAT
\STATE $k \leftarrow k+1$;
\STATE Find the set $\mathcal{S}^k$ of all parity inequalities and box constraints that are violated at $u^{k-1}$;
\STATE If $|\mathcal{S}^k|>0$, add the constraints in $\mathcal{S}^k$ to the LP problem and solve it to obtain $u^k$;
\UNTIL{$|\mathcal{S}^k|=0$}
\STATE Output $u = u^k$ as the solution to LP decoding.
\end{algorithmic}
\end{algorithm}

% \begin{algorithm}
% \caption{ALP Decoding}
% \label{adaptive LP}
% \begin{algorithmic}[1]
% % \STATE Initialize: $k \leftarrow 0$ and $\mathbf{flag} \leftarrow 0$;
% \STATE Setup the initial LP problem with constraints from (\ref{initial constraints}), and $k \leftarrow 0$;
% \STATE Find the solution $u^0$ to the initial LP problem by bit-wise hard decision;
% \REPEAT
% \STATE $k \leftarrow k+1;\ flag \leftarrow 0;$
% \FOR{$i=1$ to $m$}
% \IF{check node $j$ introduces a parity inequality that is violated at $u^{k-1}$}
% \STATE Add the new (violated) constraint to the LP problem; $flag \leftarrow 1$;
% \ENDIF
% \ENDFOR
% \STATE If $flag=1$, solve the LP problem to obtain $u^k$;
% \UNTIL{$flag=0$}
% \STATE Output $u = u^k$ as the solution to LP decoding.
% \end{algorithmic}
% \end{algorithm}
The adaptive LP decoding algorithm is presented here as Algorithm \ref{adaptive LP} (ALP decoding). In Step 5 of this algorithm, the search for all the violated parity inequalities can be performed using Algorithm 1 of \cite{ALP ISIT} in $O(\sum_{i=1}^m d_j \log d_j) = O(m d_{max} \log d_{max})$ time, without having to examine all the $O(m 2^{d_{max}})$ parity inequalities given by the original LP decoding formulation. Furthermore, based on observations, it is conjectured in \cite{ALP Journal} that there is no need to check for violated box constraints in Step 5, since they cannot be violated at any of the intermediate solutions $u^k$ of ALP decoding. In the next section, we present a proof of this conjecture.

In \cite{ALP ISIT}, the number of iterations of ALP decoding was upper-bounded by the code length, $n$. However, it was observed in the simulations that the typical number of iterations is much smaller in practice (less than $20$ for all $n < 2000$). Moreover, one can conclude from the following theorem that, at each iteration of ALP decoding, the number of violated parity inequalities added to the problem is at most $m$, where $m$ is the number of check nodes.
\begin{theorem}[\cite{ALP Journal}]
\label{one cut}
If at any given point $u \in [0,1]^n$, one of the parity inequalities introduced by a check node $j$ is violated, the rest of the parity inequalities from this check node are satisfied with strict inequality.
\end{theorem}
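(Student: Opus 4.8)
The plan is to work with the equivalent form (\ref{constraints2}) of the parity inequalities and to show that any two distinct parity inequalities coming from the same check node $j$ cannot both have small slack at the point $u$. For a check node $j$ and an arbitrary subset $\mathcal{W}\subseteq\mathcal{N}(j)$ (not necessarily of odd size), I would first set
\[
f_j(\mathcal{W}) \;:=\; \sum_{i\in\mathcal{W}}(1-u_i) \;+\; \sum_{i\in\mathcal{N}(j)\setminus\mathcal{W}} u_i ,
\]
so that the parity inequality indexed by an odd-sized $\mathcal{W}$ reads $f_j(\mathcal{W})\geq 1$ and is violated at $u$ precisely when $f_j(\mathcal{W})<1$. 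Since $u\in[0,1]^n$, every summand in $f_j(\mathcal{W})$ is nonnegative, hence $f_j(\mathcal{W})\geq 0$ for every $\mathcal{W}$.

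The key step is the pairwise bound: for any two subsets $\mathcal{W},\mathcal{W}'\subseteq\mathcal{N}(j)$,
\[
f_j(\mathcal{W}) + f_j(\mathcal{W}') \;\geq\; |\mathcal{W}\triangle\mathcal{W}'| .
\]
I would prove this by an index-by-index accounting over $i\in\mathcal{N}(j)$: if $i$ lies in both $\mathcal{W}$ and $\mathcal{W}'$ its combined contribution is $2(1-u_i)\geq 0$; if $i$ lies in neither it is $2u_i\geq 0$; and if $i$ lies in exactly one of the two sets the contribution is $(1-u_i)+u_i=1$. Summing over all $i$ and discarding the nonnegative terms leaves exactly $|\mathcal{W}\triangle\mathcal{W}'|$ on the right-hand side.

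Finally I would apply this with $\mathcal{W}=\mathcal{V}$, the odd-sized set whose parity inequality is assumed violated, and $\mathcal{W}'=\mathcal{V}'$, any other odd-sized subset of $\mathcal{N}(j)$. Because $|\mathcal{V}|$ and $|\mathcal{V}'|$ are both odd, $|\mathcal{V}\triangle\mathcal{V}'| = |\mathcal{V}| + |\mathcal{V}'| - 2|\mathcal{V}\cap\mathcal{V}'|$ is even, and it is nonzero since $\mathcal{V}\neq\mathcal{V}'$; hence $|\mathcal{V}\triangle\mathcal{V}'|\geq 2$. The pairwise bound then yields $f_j(\mathcal{V}')\geq 2-f_j(\mathcal{V})>2-1=1$, and unwinding the definition of $f_j$ shows that this is exactly the statement that the parity inequality indexed by $\mathcal{V}'$ holds with strict inequality.

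I expect the only genuine obstacle is spotting the right move: rather than reasoning about a single constraint in isolation, one should bound the \emph{sum} of the slacks of a pair of constraints, and combine this with the observation that the odd-cardinality requirement forces the symmetric difference of two distinct constraint-sets to have size at least $2$. Once these two observations are in place, the computation is the one-line case analysis above. A minor point worth flagging is that the argument uses $u\in[0,1]^n$ only through the nonnegativity of each term $1-u_i$ and $u_i$, so it applies verbatim at every intermediate iterate $u^k$ of ALP decoding, which is how one then concludes that at most $m$ parity inequalities are added per iteration.
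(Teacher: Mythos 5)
Your proof is correct: the pairwise slack bound $f_j(\mathcal{W})+f_j(\mathcal{W}')\geq|\mathcal{W}\triangle\mathcal{W}'|$, combined with the fact that two distinct odd-sized subsets of $\mathcal{N}(j)$ have symmetric difference of size at least $2$, immediately gives $f_j(\mathcal{V}')>1$ whenever $f_j(\mathcal{V})<1$. Note that this paper only cites the theorem from the ALP journal paper without reproducing a proof, and your argument is essentially the same one used there (the terms indexed by the symmetric difference flip from $a_i$ to $1-a_i$, and at least two such flips force the other constraint's left-hand side above $1$), so no gap or divergence to report.
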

% \begin{algorithm}
% \caption{ALP}
% \label{adaptive LP}
% \begin{algorithmic}[1]
% \STATE Initialize: $k \leftarrow 0$ and $\mathbf{flag} \leftarrow 0$;
% \STATE Setup the initial LP problem with constraints from (\ref{initial constraints});
% \STATE Find the solution $u^0$ to the LP problem by bit-wise hard decision;
% \WHILE{$\mathbf{flag} = 0$} %{$u^k \notin \mathcal{P}$}
% \STATE $k \leftarrow k+1$
% \STATE Search for all parity inequalities and box constraints that are violated at $u^k$;
% \IF{any violated constraints were found}
% \STATE Add all of them to the LP problem;
% \STATE Solve the LP problem to obtain $u^k$;
% \ELSE 
% \STATE $\mathbf{flag} \leftarrow 1$, since $u^k \in \mathcal{P}$;
% \ENDIF
% \ENDWHILE
% \STATE Output $u = u^k$ as the solution of LP decoding.
% \end{algorithmic}
% \end{algorithm}
% 
% \begin{algorithm}[ALP]
% \label{adaptive LP}
% \end{algorithm}
% 
% \emph{Step 1: Setup the initial problem according to (\ref{initial constraints}); set $k=0$.}
% 
% \emph{Step 2: Run the LP solver to obtain the current solution $\underline{x}_k$.}
% 
% \emph{Step 3: Find all parity inequalities and box constraints that generate cuts at $\underline{x}_k$.}
% 
% \emph{Step 4: If one or more such cuts are found, add them to the problem constraints, set $k=k+1$, 
% and go to Step 2. If not, we have found the solution. Exit.}
% 
% \medskip
% 
\section{Properties and Variations of ALP decoding}
In this section, we prove some properties of LP and ALP decoding, and propose some modifications to the ALP algorithm. As we will see, many of the elegant properties of these algorithms are consequences of Theorem \ref{one cut}.

First, we propose an alternative to using Algorithm 1 of \cite{ALP ISIT} for finding all the violated parity inequalities at any given point $u \in [0 , 1]^n$. Consider the general form of parity inequalities in (\ref{constraints2}) for a given check node $j$, and note that at most one of these inequalities can be violated at $u$. To find this inequality, if it exists, we need to find an odd-sized $\mathcal{V}\subseteq \mathcal{N}(j)$ that minimizes the left-hand side of (\ref{constraints2}). If there were no requirement that $|\mathcal{V}|$ is odd, the left-hand side expression would be minimized by putting any $i \in \mathcal{N}(j)$ with $u_i \geq \frac{1}{2}$ in $\mathcal{V}$. However, if such $\mathcal{V}$ has an even cardinality, we need to select one element $i^*$ of $\mathcal{N}(j)$ to add to or remove from $\mathcal{V}$, such that the increase on the left-hand side of (\ref{constraints2}) is minimal. This means that $i^*$ is the element whose corresponding value $u_{i^*}$ is closest to $\frac{1}{2}$. This results in Algorithm \ref{find parity cuts}, which has $O(d_j)$ complexity for check node $j$, thus reducing the complexity of finding all the $m$ parity inequalities from $O(\sum_{i=1}^m d_j \log d_j)$
%$O(m d_{max} \log d_{max})$
with Algorithm 1 of \cite{ALP ISIT} to $O(\sum_{j=1}^m d_j) = O(E)$, where $E$ is the total number of edges in the Tanner graph.

\begin{algorithm}
\caption{Find the Violated Parity Inequality from Check Node $j$ at $u$}
\label{find parity cuts}
\begin{algorithmic}[1]
\STATE $\mathcal{S} \leftarrow \{i \in \mathcal{N}(j) | u_i>\frac{1}{2}\}$;
\IF {$|\mathcal{S}|$ is odd}
\STATE $\mathcal{V}\leftarrow \mathcal{S}$;
\ELSE 
\STATE $i^* \leftarrow \arg\min_{i\in \mathcal{N}(j)} |u_i-\frac{1}{2}|$;
\STATE $\mathcal{V}\leftarrow \mathcal{S}\backslash \{i^*\}$ if $i^* \in \mathcal{S}$; otherwise $\mathcal{V}\leftarrow \mathcal{S}\cup \{i^*\}$;
%\IF {$i^* \in A$}
%\STATE $\mathcal{V}=A\backslash \{i^*\}$;
%\ELSE 
%\STATE $\mathcal{V}=A\cup \{i^*\}$;
%\ENDIF
\ENDIF
\IF{ (\ref{constraints2}) is satisfied at $u$ for this $j$ and $\mathcal{V}$}
\STATE Check node $j$ does not introduce a violated parity inequality at $u$;
\ELSE \STATE We have found the violated parity inequality from check node $j$;
\ENDIF
\end{algorithmic}
\end{algorithm}

\subsection{Modified ALP Decoding}
%As we will see, many of the elegant properties of these algorithms are consequences of Theorem \ref{one cut}.
\begin{definition}
A linear inequality constraint of the form $a^T x\leq b$ is called \emph{active} at point $x^0$ if it holds with equality; i.e., $a^T x^0 = b$, and is called \emph{inactive} if it holds with strict inequality; i.e. $a^T x^0 < b$.
% and it is called \emph{nonbinding} in an LP problem $P$ if removing it from the problem constraints does not change the solution to the problem.
\end{definition}
The following is a corollary of Theorem \ref{one cut}
\begin{corollary}
\label{one cut cor}
If one of the parity inequalities introduced by a check node is active at a point $x^0 \in [0 , 1]^n$, all parity inequalities from this check node must be satisfied at $x^0$.
\end{corollary}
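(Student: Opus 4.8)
The plan is to derive the statement directly from Theorem~\ref{one cut} by contraposition. Recall that at the point $x^0 \in [0,1]^n$ each parity inequality of check node $j$ is in exactly one of three states: violated (left-hand side of (\ref{constraints2}) strictly below $1$), active (equal to $1$), or inactive (strictly above $1$); ``satisfied'' means active or inactive, and ``satisfied with strict inequality'' means inactive. Suppose, for contradiction, that some parity inequality $c_1$ from check node $j$ is active at $x^0$ while some other parity inequality $c_2$ from the same check node is violated at $x^0$. Since an active constraint is satisfied, $c_1$ is not violated, hence $c_1 \neq c_2$, and in particular $c_1$ belongs to the set of ``the rest of the parity inequalities from check node $j$'' in the statement of Theorem~\ref{one cut}.

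Now apply Theorem~\ref{one cut} with $u = x^0$ and with $c_2$ playing the role of the violated parity inequality: the theorem asserts that every other parity inequality from check node $j$ is satisfied with strict inequality at $x^0$, and therefore $c_1$ is inactive at $x^0$. This contradicts the hypothesis that $c_1$ is active. Consequently, no parity inequality from check node $j$ can be violated at $x^0$; that is, all of them are satisfied at $x^0$, which is precisely the claim of Corollary~\ref{one cut cor}.

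I do not expect any genuine obstacle here: the result is an immediate consequence of Theorem~\ref{one cut}, and the only thing that needs a moment's care is the bookkeeping of the ``violated / active / inactive'' trichotomy — specifically the observation that an active constraint cannot coincide with a violated one, so that Theorem~\ref{one cut} legitimately applies to the pair $(c_1, c_2)$. (It is worth noting in passing that this also rules out the degenerate possibility of the active constraint itself being the unique violated one, since ``active'' and ``violated'' are mutually exclusive.)
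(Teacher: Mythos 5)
Your proof is correct and follows the same route the paper intends: the paper states Corollary~\ref{one cut cor} without an explicit argument, as an immediate consequence of Theorem~\ref{one cut}, and your contrapositive bookkeeping (an active inequality cannot be the violated one, so Theorem~\ref{one cut} forces it to be strictly satisfied, a contradiction) is exactly the implicit reasoning. Nothing is missing.
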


% \subsection{
Corollary \ref{one cut cor} can be used to simplify Step 5 of ALP decoding (Algorithm \ref{adaptive LP}) as follows. We first find the parity inequalities currently in the problem that are active at the current solution, $u^k$. This can be done simply by checking if the slack variable corresponding to a constraint is zero. Then, in the search for violated constraints, we exclude the check nodes that introduce these active inequalities.

Now consider the linear program $LP^k$ at an iteration $k$ of ALP decoding, with an optimum point $u^k$. This point is the vertex (apex) of the $n$-dimensional cone formed by all hyperplanes corresponding to the active constraints. It is easy to see that among the constraints in this linear program, the inactive ones are \emph{non-binding}, meaning that, if we remove the inactive constraints from the problem, $u^k$ remains an optimum point of the feasible space. This fact motivates a modification in the ALP decoding algorithm, where, after solving each LP, a subset of the constraints that are active at the solution are removed. 

By combining the two ideas proposed above, we obtain the modified ALP decoding algorithm A (MALP-A decoding), stated in Algorithm \ref{Modified ALP}. 
It was conjectured in \cite{ALP Journal} that no box constraint can be violated at any intermediate solution of ALP decoding. We will prove this conjecture for both ALP and MALP decoding in this section. Hence, we do not search for violated box constraints in the intermediate iterations of the proposed algorithms.
% Note that, due to a conjecture in \cite{ALP Journal}, we do not search for violated box constraints in the intermediate iterations of this algorithm. We will present a proof for this this conjecture in this section, showing that no box constraint can be violated at any intermediate solution.

\begin{algorithm}
\caption{MALP-A Decoding}
\label{Modified ALP}
\begin{algorithmic}[1]
% \STATE Initialize: $k \leftarrow 0$ and $\mathbf{flag} \leftarrow 0$;
\STATE Setup the initial LP problem with constraints from (\ref{initial constraints}), and $k \leftarrow 0$;
\STATE Find the solution $u^0$ to the initial LP problem by bit-wise hard decision;
\REPEAT
\STATE $k \leftarrow k+1;\ flag \leftarrow 0;$
\FOR{$j=1$ to $m$}
\IF{there is no active parity inequality from check node $j$ in the problem}
\IF{check node $j$ introduces a parity inequality that is violated at $u^{k-1}$}
\STATE Remove the parity inequalities of check node $j$ (if any) from the current problem; 
\STATE Add the new (violated) constraint to the LP problem; $flag \leftarrow 1$;
\ENDIF
\ENDIF
\ENDFOR
\STATE If $flag=1$, solve the LP problem to obtain $u^k$;
\UNTIL{$flag=0$}
\STATE Output $u = u^k$ as the solution to LP decoding.
\end{algorithmic}
\end{algorithm}

Checking the condition in line 7 can be done using Algorithm \ref{find parity cuts} in $O(d_j)$ time, where $d_j$ is the degree of check node $j$, and the role of the if-then structure of line 6 is to limit this processing to only check nodes that are not currently represented in the problem by an active constraint. In line 8, before adding a new constraint from check node $j$ to the problem, any existing (inactive) constraint is removed from the problem. Alternatively, we can move this command to line 6; i.e. remove all the inactive constraints in the problem. We call the resulting algorithm the modified ALP decoding algorithm B (MALP-B decoding). 

The LP problems solved in the ALP and modified ALP decoding algorithms can be written in the ``standard'' matrix form as

\begin{equation}
\label{MALP matrix form}
\hspace{0.2in}
\setlength{\nulldelimiterspace}{0pt}
\left.\begin{IEEEeqnarraybox}[\relax][c]{l's}
\text{minimize} \hspace{0.2in} \gamma^T  u \\
\text{subject to} \hspace{0.18in} A u \leq b,\\
\hspace{0.85in} u_i\geq 0 \hspace{0.2in} \forall\ i\in \mathcal{I}:\ \gamma_i\geq0, \\
\hspace{0.85in} u_i\leq 1 \hspace{0.2in} \forall\ i\in \mathcal{I}:\ \gamma_i<0,
\end{IEEEeqnarraybox}\right. \vspace{0.1in}
\end{equation}
where matrix $A$ is called the \emph{constraint matrix}.
\subsection{Properties}
In Theorem 2 of \cite{ALP ISIT}, it has been shown that the sequence of solutions to the intermediate LP problems in ALP decoding converges to that of LP decoding in at most $n$ iterations. In the following theorem, in addition to proving that this property holds for the two modified ALP decoding algorithms, we show three additional properties shared by all three variations of adaptive LP decoding.

We assume that the optimum solutions to all the LP problems in the intermediate iterations of either ALP, MALP-A, or MALP-B decoding are unique. However, one can see that this uniqueness assumption is not very restrictive, since it holds with high probability if the channel output has a finite probability density function (pdf). Moreover, channels that do not satisfy this property, such as the binary symmetric channel (BSC), can be modified to do so by adding a very small continuous-domain noise to their output (or LLR vector).

\begin{theorem}[Properties of adaptive LP decoding]
\label{properties of ALP}
Let $u^0, u^1, \ldots, u^K$ be the unique solutions to the sequence of LP problems, $LP^0, LP^1, \ldots, LP^K$, solved in either ALP, MALP-A, or MALP-B decoding algorithms. Then, the following properties hold for all three algorithms:
\begin{enumerate}
\renewcommand{\labelenumi}{\alph{enumi})}
\item The sequence of solutions $u^0, u^1, \ldots$ satisfy all the box constraints $0\leq u_i \leq 1,\ \forall\ i=1, \ldots, n$. 
\item The costs of these solutions monotonically increase with the iteration number; i.e., 
\begin{equation}
\gamma^T u^0 < \gamma^T u^1 < \ldots
\end{equation}
\item $u^0, u^1, \ldots$ converge to the solution of LP decoding, $u^*$, in at most $n$ iterations.
\item Consider the set of parity inequalities included in $LP^k$ which are active at its optimum solution, $u^k$. Let $\mathcal{J}^k = \{j_1, j_2, \ldots, j_{|\mathcal{J}^k|}\}$ be the set of indices of check nodes that generate these inequalities.
%are represented by at least on active parity inequality at the solution $u^k$ of $LP^k$. 
Then, $u^k$ is the solution to an LP decoding problem $LPD^k$ with the LLR vector $\gamma$ and the Tanner graph corresponding to the check nodes in $\mathcal{J}^k$.
\end{enumerate} 
\end{theorem}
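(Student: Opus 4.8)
The plan is to establish the four properties in the order (b), (a), (d), (c): property~(b) is self-contained, (a) is the technical core, and (d) and (c) then follow from it with little extra effort. One elementary fact about linear programs will be used throughout: if an LP has a unique bounded optimum $\tilde u$, then deleting all constraints that are inactive at $\tilde u$ leaves $\tilde u$ as the \emph{unique} optimum of the reduced LP (the optimum value is unchanged because the deleted constraints are non-binding, and a second optimum $v\neq\tilde u$ of the enlarged feasible region would force a segment of optima through $\tilde u$, a subsegment of which would still lie in the original region, contradicting uniqueness there). For property~(b): in each of the three algorithms the constraints of $LP^k$ contain all parity inequalities of $LP^{k-1}$ that are active at $u^{k-1}$ (these are never removed), all the box constraints (\ref{initial constraints}), and at least one parity inequality violated at $u^{k-1}$. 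Let $\widehat{LP}$ be the LP having only the first two groups of constraints; by the elementary fact $u^{k-1}$ is its unique optimum, and since the feasible region of $LP^k$ lies inside that of $\widehat{LP}$ we get $\gamma^T u^k\geq\gamma^T u^{k-1}$. Equality would make $u^k$ a second optimum of $\widehat{LP}$, hence $u^k=u^{k-1}$, contradicting the fact that $u^{k-1}$ violates a constraint of $LP^k$ while $u^k$ does not. (For plain ALP this is simpler still, since the constraints of $LP^{k-1}$ are a subset of those of $LP^k$.)

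For property~(a): I would argue by strong induction on $k$ that $u^k\in[0,1]^n$, the base case $u^0\in\{0,1\}^n$ being immediate. For the inductive step let $\bar u$ denote $u^k$ clipped coordinatewise into $[0,1]$; if $\bar u$ is feasible for $LP^k$ and $\gamma^T\bar u\leq\gamma^T u^k$, then uniqueness of the optimum of $LP^k$ forces $\bar u=u^k$, giving $u^k\in[0,1]^n$. The cost inequality is easy: $u^k_i>1$ implies $\gamma_i\geq0$ (else the box constraint $u_i\leq1$ of (\ref{initial constraints}) is present in $LP^k$ and is violated), and symmetrically $u^k_i<0$ implies $\gamma_i\leq0$, so clipping never increases $\gamma_i u_i$. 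Feasibility of $\bar u$ is the crux: the box constraints hold trivially at $\bar u$, so the point is to show every parity inequality present in $LP^k$, say from check node $j$, still holds at $\bar u$. The clean case is a parity inequality that is active at $u^k$: combined with the portion of the induction already in hand, $u^k$ restricted to $\mathcal{N}(j)$ then lies in the local parity polytope of check $j$, which is a subset of the unit cube by Lemma~\ref{convex hull}, so clipping leaves those coordinates untouched — this is essentially Corollary~\ref{one cut cor}. The delicate case is the inactive parity inequalities retained from earlier iterations (for ALP all of them, for MALP-A some of them): for such an inequality one traces back to the iteration at which it was introduced by Algorithm~\ref{find parity cuts} at an iterate already known (by the induction hypothesis) to lie in $[0,1]^n$, so its defining subset $\mathcal{V}$ is the rounding $\{i\in\mathcal{N}(j):u_i>\tfrac12\}$ up to one element, and one uses this together with the retained slack to bound the coordinates $u^k_i$, $i\in\mathcal{N}(j)$, into $[0,1]$. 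Making this airtight is where I expect the main difficulty; in particular there is an apparent circularity, since Corollary~\ref{one cut cor} presumes its point lies in $[0,1]^n$ while that is exactly what is being proved at step $k$, so I would either carry extra coordinate bounds for the represented check nodes explicitly along in the induction, or replace the black-box use of Corollary~\ref{one cut cor} at step $k$ by a direct descent argument: assuming some $u^k_i\notin[0,1]$, move through the active parity inequalities (whose defining subsets are controlled via Algorithm~\ref{find parity cuts} at earlier, in-range iterates) to contradict optimality of $u^k$ for $LP^k$.

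For property~(d): by~(a), $u^k\in[0,1]^n$, so Corollary~\ref{one cut cor} guarantees that every check node $j\in\mathcal{J}^k$ has all of its parity inequalities satisfied at $u^k$; thus $u^k$ is feasible for $LPD^k$, the LP decoding problem on the check nodes of $\mathcal{J}^k$ (with the full box constraints $0\leq u_i\leq1$). Deleting the inactive constraints of $LP^k$ leaves $u^k$ as its unique optimum (elementary fact), and the surviving constraints — the active parity inequalities from $\mathcal{J}^k$ together with the one-sided box constraints (\ref{initial constraints}) — form a subset of the constraints of $LPD^k$. Hence the feasible region of $LPD^k$ sits inside that of this reduced LP, so $u^k$, being feasible for $LPD^k$ and the unique optimum of the larger region, is the unique optimum of $LPD^k$.

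For property~(c): termination follows from~(b) — the costs $\gamma^T u^k$ strictly increase while the $u^k$ are optima of LPs drawn from a finite family of possible constraint sets, so the sequence cannot cycle and must stop, at some $u^K$. At termination no parity inequality is violated at $u^K$, and $u^K\in[0,1]^n$ by~(a), so $u^K$ is feasible for full LP decoding; since the constraints of $LP^K$ form a subset of those of LP decoding, $\gamma^T u^K\leq\gamma^T u^*$, whence $u^K=u^*$ by the uniqueness assumption. For the quantitative bound of $n$ iterations I would adapt the argument of \cite{ALP ISIT}: each $u^k$, being the unique optimum of $LP^k$, is pinned down by $n$ active constraints with linearly independent normals; a newly added violated parity inequality must be active at $u^k$ (otherwise $u^k$ would coincide with $u^{k-1}$), and one shows that the rank of the span of the active parity-inequality normals increases by at least one at every non-terminal iteration, so after at most $n$ iterations the box constraints (\ref{initial constraints}) are no longer needed to determine the solution and the algorithm must have stopped.
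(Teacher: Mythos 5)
Your handling of parts (b), (c), and (d) is essentially sound and close to the paper's own route: for (b) the paper likewise passes to the relaxation obtained by dropping the inactive parity inequalities of $LP^k$ and uses containment of feasible sets; for (d) the paper argues a little differently (it explicitly treats the variables not involved in any check node of $\mathcal{J}^k$, which decouple and sit at $0$ or $1$, whereas you get the same conclusion by noting that the retained constraints of the reduced LP are implied by those of $LPD^k$); and (c) is deferred by the paper to \cite{ALP ISIT}, much as you sketch it. The genuine gap is in part (a), and it is exactly the step you flag yourself: you never establish that the clipped point $\bar u$ satisfies the parity inequalities that are \emph{inactive} at $u^k$ (those carried over from earlier iterations), and neither of your two proposed repairs (carrying extra coordinate bounds through the induction, or a descent argument through the active constraints) is carried out. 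Since (d) leans on (a), the theorem is not proved as written.

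The missing idea is that no induction, no tracking of when a constraint was introduced, and no appeal to Corollary \ref{one cut cor} are needed: every parity inequality of the form (\ref{constraints2}) present in the LP, active or not, is preserved by coordinatewise clipping into $[0,1]^n$, unconditionally. The paper makes this transparent with the change of variables $v_i=u_i$, $\lambda_i=\gamma_i$ when $\gamma_i\geq 0$ and $v_i=1-u_i$, $\lambda_i=-\gamma_i$ when $\gamma_i<0$: all objective coefficients become nonnegative, all $n$ constraints (\ref{initial constraints}) become $v_i\geq 0$ (so the solution of any intermediate LP can only violate the upper bounds), and clipping the coordinates exceeding $1$ down to $1$ never increases the cost while preserving each transformed inequality $\sum_{i\in\mathcal A_j}(1-v_i)+\sum_{i\in\mathcal B_j}v_i\geq 1$: either some clipped coordinate lies in $\mathcal B_j$, in which case that single term equals $1$ and all other terms are nonnegative, or the clip only touches $\mathcal A_j$, in which case every term weakly increases. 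Uniqueness of the optimum then gives an immediate contradiction, in one shot and for a single LP (no induction over $k$). The same two-case analysis works in your untransformed coordinates if you prefer: a coordinate clipped up to $0$ inside $\mathcal V$, or down to $1$ inside $\mathcal N(j)\backslash\mathcal V$, by itself contributes $1$ to the left-hand side of (\ref{constraints2}) once all terms are nonnegative, and in every remaining configuration the clip weakly increases each term. With this observation your "delicate case" disappears, and the circularity with Corollary \ref{one cut cor} that concerned you never arises.
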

The proof of this theorem is given in Appendix \ref{Proof of theorem}.

The following theorem shows an interesting property of the modified ALP decoding schemes, which we call the ``single-constraint property.'' This property does not hold for ALP decoding.
\begin{theorem}
\label{one ineq per check}
In the LP problem at any iteration $k$ of the MALP-A and MALP-B decoding algorithms, there is at most one parity inequality corresponding to each check node of the Tanner graph.
\end{theorem}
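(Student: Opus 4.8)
The plan is to prove the statement by induction on the iteration index $k$, keeping track, for every check node $j$, of the number $p_j^k$ of parity inequalities derived from check $j$ that appear in the linear program $LP^k$; the goal is to show $p_j^k\le 1$ for all $j$ and all $k$. The base case is immediate, since $LP^0$ contains only the box constraints of (\ref{initial constraints}) and hence $p_j^0=0$ for every $j$.

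For the inductive step, suppose $p_j^{k-1}\le 1$ for all $j$ and follow one pass of the FOR loop of Algorithm \ref{Modified ALP}. Each check node $j$ is examined exactly once, and the only lines that can alter $p_j$ are the removal (line 8) and the addition (line 9); moreover, by the nested if-structure of lines 6--7, line 9 is reached for check $j$ only when line 8 has just deleted every parity inequality of $j$ currently in the problem, and it then inserts exactly one new inequality — the single candidate returned by Algorithm \ref{find parity cuts}. Consequently, after the pass each check node is either left untouched, so $p_j$ stays $\le 1$ by the induction hypothesis, or is ``refreshed'' to exactly one inequality; since solving the LP on line 13 does not change the constraint set, $p_j^k\le 1$, which completes the induction for MALP-A. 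For MALP-B the same reasoning applies after one additional observation: the preliminary removal of \emph{all} inactive constraints can only decrease each $p_j$, so it preserves the induction hypothesis, and the subsequent FOR loop adds an inequality only to a check node that currently has no parity inequality in the problem at all, so again $p_j^k\le 1$.

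The argument is essentially bookkeeping, so the point that needs care is making the inductive step precise as the FOR loop mutates the LP in place: when check $j$ is processed, the inequalities attributed to check $j$ are exactly those inherited from $LP^{k-1}$ (processing any check $j'\neq j$ neither adds nor removes inequalities labelled with $j$), and the tests of activity and of violation are always made at the fixed point $u^{k-1}$, which lies in $[0,1]^n$ by part (a) of Theorem \ref{properties of ALP}. This last fact is what legitimizes the use of Algorithm \ref{find parity cuts} (equivalently Theorem \ref{one cut}), guaranteeing that at most one new parity inequality per check node is produced per iteration. No deeper geometric input is required; in particular, the proof does not invoke any ``at most one active inequality per check at the optimum'' statement — that property is instead an \emph{a posteriori} consequence — and so it does not rely on the uniqueness assumption on the intermediate optima.
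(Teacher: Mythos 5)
Your proof is correct and takes essentially the same route as the paper: an induction on the iteration index whose inductive step rests on the fact that a new parity inequality from check node $j$ is only ever added immediately after all of check $j$'s existing inequalities have been removed, with Algorithm~\ref{find parity cuts} supplying at most one candidate per check node; the paper compresses this by invoking Corollary~\ref{one cut cor} to note that an existing inequality from a check node with a violated cut must be inactive and is therefore removed before the new cut is added. The only slight wobble is your closing remark: you disclaim the uniqueness assumption while invoking part~(a) of Theorem~\ref{properties of ALP}, whose proof in the paper does use uniqueness --- though, as your bookkeeping shows, the counting argument itself does not actually need $u^{k-1}\in[0,1]^n$, so this is cosmetic rather than a gap.
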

\begin{proof}[By induction]
The initial LP problem consists only of box constraints. So, it suffices to show that, if the LP problem $LP^k$ at an iteration $k$ satisfies the desired property, the LP problem $LP^{k+1}$ in the subsequent iteration satisfies this property, as well. Consider check node $j$ which has a violated parity inequality $\kappa_j$ at the solution $u^k$ of $LP^k$. According to Corollary \ref{one cut cor}, if there already has been a parity inequality $\tilde\kappa_j$ from this check node in $LP^k$, $\tilde\kappa_j$ cannot be active at $u^k$, hence, the MALP decoder will remove $\tilde\kappa_j$ before adding $\kappa_j$ to $LP^{k+1}$. As a result, there cannot be more than one parity inequality from any check node $j$ in $LP^{k+1}$
\end{proof}

\begin{corollary}
\label{m constraints}
The number of parity inequalities in any linear program solved by the MALP decoding algorithms is at most $m$
\end{corollary}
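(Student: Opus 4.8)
The statement to prove is Corollary \ref{m constraints}: the number of parity inequalities in any linear program solved by the MALP decoding algorithms is at most $m$.

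\medskip

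The plan is to derive this directly from Theorem \ref{one ineq per check}, which is the substantive result; the corollary is essentially a counting observation. First I would recall that Theorem \ref{one ineq per check} asserts that, at any iteration $k$ of either MALP-A or MALP-B decoding, the linear program $LP^k$ contains at most one parity inequality per check node of the Tanner graph. Since the Tanner graph has exactly $m$ check nodes (indexed by $\mathcal{J} = \{1, \ldots, m\}$), summing the per-check-node bound over all check nodes gives a total of at most $m$ parity inequalities in $LP^k$. I would phrase this as: for each $j \in \mathcal{J}$, let $c_j^k \in \{0,1\}$ denote the number of parity inequalities from check node $j$ present in $LP^k$; Theorem \ref{one ineq per check} gives $c_j^k \le 1$, so the total number of parity inequalities is $\sum_{j=1}^m c_j^k \le m$.

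\medskip

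One point worth making explicit is that $LP^k$ also contains the box constraints from (\ref{initial constraints}), but these are not parity inequalities, so they do not affect the count; the corollary concerns only parity inequalities, matching the quantity bounded in Theorem \ref{one ineq per check}. Since the bound holds for every iteration $k = 0, 1, \ldots, K$, it holds for every linear program solved by the MALP algorithms, which is exactly the claim. There is essentially no obstacle here: the entire content is in Theorem \ref{one ineq per check}, and the corollary is a one-line consequence. The only thing to be careful about is making sure the quantifier ``any linear program solved by the MALP decoding algorithms'' is correctly matched to ``$LP^k$ for any $k$,'' which is immediate from the description of the algorithms.
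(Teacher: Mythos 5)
Your proof is correct and matches the paper's intent exactly: the corollary is stated as an immediate consequence of Theorem \ref{one ineq per check} (at most one parity inequality per check node, and there are $m$ check nodes), which is precisely the counting argument you give. No issues.
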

%Theorem \ref{one ineq per check} shows that the number of parity inequalities in any linear program solved by the MALP decoding algorithms is at most $m$. 

The result above is in contrast to the non-adaptive formulations of LP decoding, where the size of the LP problems grows with the check node degree. Consequently, the complexity of these two algorithms can be bounded by their number of iterations times the worst-case complexity of solving an LP problem with $n$ variables and $m$ parity inequalities. Therefore, an interesting problem to investigate is how the number of iterations of the MALP decoding algorithms varies with the code parameters, such as the length or the check node degrees, and how its behavior changes depending on whether the LP decoding output is integral or fractional. In Subsection III-D, we present some simulation results, studying and comparing ALP decoding and its modifications in terms of the number of iterations.

An important consequence of Theorem \ref{one ineq per check} is that, in the LP problems that are solved by these two algorithms, the distribution of the nonzero elements of the LP constraint matrix, $A$, has the same structure as that of the parity-check matrix, $H$, after removing the rows of $H$ that are not represented by a parity inequality in the LP. This is due to the fact that the support set of a row of $A$, corresponding to a parity inequality, is identical to that of the row of $H$ from which it has been derived, and in addition, each row of $A$ is derived from a unique row of $H$. 
As we will see later in this paper, this property, which is not shared by LP or ALP decoding, maintains the same desirable combinatorial properties (e.g., degree distribution) for $A$ that the $H$ matrix has. This can be exploited in the design of efficient LP solvers.

Remember that the LP problem in the last iteration of the MALP decoding algorithms has the same solution as standard LP decoding. This solution is a vertex of the feasible set, defined by at least $n$ active inequalities from this LP problem. Hence, using Corollary \ref{m constraints}, we conclude that at least $n-m$ box constraints are active at the solution of LP decoding. This yields the following properties of LP decoding. 

\begin{corollary}
\label{n-m correct}
The solution to any LP decoding problem differs in at most $n-m$ coordinates from the vector obtained by making bit-based hard decisions on the LLR vector $\gamma$.
\end{corollary}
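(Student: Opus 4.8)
The plan is to deduce the corollary from the structure of the last linear program solved by MALP decoding. By Theorem~\ref{properties of ALP}(c), and as recalled just before the statement, the unique optimum $u^*$ of that final LP is precisely the solution of LP decoding. Being the unique, bounded optimum of this LP, $u^*$ is a vertex of its feasible region (the region is pointed, since each coordinate carries a one-sided box constraint and hence the region contains no line), so by the discussion of linear programs in Section~II it lies on at least $n$ active constraint hyperplanes of that LP.

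I would then split the active constraints at $u^*$ into parity inequalities and box constraints. By Corollary~\ref{m constraints}, the final LP contains at most $m$ parity inequalities altogether, so at most $m$ of the $\ge n$ active constraints are parity inequalities; hence at least $n-m$ of them are box constraints. Distinct box constraints involve distinct coordinates (there is exactly one box constraint per coordinate in the MALP formulation), so at least $n-m$ coordinates of $u^*$ are pinned by an active box constraint.

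The decisive observation is that the box constraints retained in the MALP formulation~(\ref{MALP matrix form}) are the ``hard-decision'' ones: for coordinate $i$ the constraint is $u_i\ge 0$ when $\gamma_i\ge 0$ and $u_i\le 1$ when $\gamma_i<0$. Thus an active box constraint at coordinate $i$ forces $u^*_i=0$ when $\gamma_i\ge 0$ and $u^*_i=1$ when $\gamma_i<0$ --- exactly the bit assigned to $i$ by the bit-wise hard decision on $\gamma$ --- while conversely a coordinate whose box constraint is inactive is precisely a coordinate at which $u^*$ can disagree with the hard-decision vector. Combining this with the previous paragraph, $u^*$ coincides with the hard-decision vector on at least $n-m$ of its $n$ coordinates, which yields the stated bound on the number of disagreeing coordinates.

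I do not expect a genuine obstacle here: the real content is imported. Corollary~\ref{m constraints} (via Theorem~\ref{one ineq per check}) supplies the bound on the number of parity inequalities, and the vertex characterization of a unique bounded LP optimum is standard. The only two points that deserve a sentence of justification are that the MALP feasible region is pointed, so that ``unique optimum'' genuinely forces ``vertex,'' and the orientation of the surviving box constraints, which is exactly what makes an active box constraint equivalent to agreement with the hard-decision bit at that coordinate.
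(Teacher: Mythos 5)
Your proposal is correct and follows essentially the same route as the paper's own derivation, which is contained in the paragraph immediately preceding the corollary: the final MALP linear program has the LP decoding solution as its unique optimum, that solution is a vertex with at least $n$ active constraints, at most $m$ of which are parity inequalities by Corollary~\ref{m constraints}, so at least $n-m$ box constraints are active, and the orientation of the retained box constraints makes each active one force agreement with the hard-decision bit at that coordinate. One small remark, which applies equally to the paper's own wording rather than being a gap you introduced: agreement on at least $n-m$ coordinates literally bounds the number of disagreements by $m$ rather than by $n-m$ (the two quantities coincide for the rate-$1/2$ codes simulated in the paper), so your closing inference simply inherits the paper's final step.
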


\begin{corollary}
\label{m fractional}
Each pseudocodeword of LP decoding has at most $m$ fractional entries.
\end{corollary}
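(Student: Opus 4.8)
The plan is to read the statement off from the single‑constraint bound of Corollary~\ref{m constraints} together with the elementary fact that an \emph{active} box constraint forces the corresponding coordinate to be an integer. The counting identity driving everything is: at a vertex of a polyhedron in $\mathbb{R}^n$ at least $n$ constraints are active, whereas at most $m$ parity inequalities are available to be among them.

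First I would reduce the claim to a statement about the solution produced by MALP decoding. A pseudocodeword is, by definition, a nonintegral vertex $v$ of the fundamental polytope $\mathcal{P}$. By the converse observation in Section~II‑A, there is an LLR vector $\gamma$ for which $v$ is the unique solution of LP decoding; since being the unique (hence strict) optimum is preserved under sufficiently small perturbations of $\gamma$, we may further assume $\gamma$ is generic enough that each of the finitely many intermediate LP problems encountered by MALP‑A decoding on input $\gamma$ also has a unique solution. Running MALP‑A on $\gamma$, Theorem~\ref{properties of ALP}(c) guarantees termination with a final program $LP^K$ whose unique optimum $u^K$ equals $v$.

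Next I would count the active constraints of $LP^K$ at $v$. Because $v=u^K$ is the unique optimum of the linear program $LP^K$ and its optimal value $\gamma^T v$ is finite, $v$ is a vertex of the feasible region of $LP^K$; hence the normals of the constraints of $LP^K$ active at $v$ span $\mathbb{R}^n$, so at least $n$ constraints of $LP^K$ are active at $v$. By Corollary~\ref{m constraints}, $LP^K$ contains at most $m$ parity inequalities, so at most $m$ of these active constraints are parity inequalities. (Active box constraints are never discarded by either MALP variant, as removal only ever targets inactive constraints, so this count is meaningful at termination.) Therefore at least $n-m$ of the active constraints are box constraints. Finally, every box constraint in $LP^K$ has the form $u_i\geq 0$ (when $\gamma_i\geq 0$) or $u_i\leq 1$ (when $\gamma_i<0$), so if such a constraint is active at $v$ then $v_i=0$ or $v_i=1$, i.e., $v_i\in\mathbb{Z}$. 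Thus at least $n-m$ coordinates of $v$ are integral, leaving at most $m$ fractional entries.

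Steps two and three are routine bookkeeping. The only delicate point is the reduction in the first step: certifying, via Theorem~\ref{properties of ALP} and the uniqueness assumption, that the last MALP problem really has $v$ as a vertex of its feasible region — this is exactly where the genericity of $\gamma$ is used, and it is also the only place the MALP machinery (rather than a direct argument on $\mathcal{P}$ itself) is essential, since in principle two parity inequalities from the same check node can be simultaneously active at a point of $\mathcal{P}$, so one cannot directly bound by $m$ the number of active parity inequalities at an arbitrary vertex of $\mathcal{P}$.
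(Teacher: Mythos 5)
Your proof is correct and follows essentially the same route as the paper: the final MALP linear program has the LP decoding solution as a vertex, at least $n$ of its constraints are active there, at most $m$ of these can be parity inequalities by Corollary~\ref{m constraints}, so at least $n-m$ single-sided box constraints are active, forcing those coordinates to be $0$ or $1$. Your additional care in choosing a generic objective vector so that the pseudocodeword is the unique LP optimum and all intermediate MALP problems have unique solutions is a reasonable elaboration of the uniqueness assumption the paper invokes implicitly, but it does not change the argument.
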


\begin{remark}
This bound on the size of the fractional support of pseudocodewords is tight in the sense that there are LP decoding relaxations which have pseudocodewords with exactly $m$ fractional entries. An example is the pseudocodeword $[1, \frac{1}{2}, 0, \frac{1}{2}, 0, 0, \frac{1}{2}]$ of the $(7, 4, 3)$ code with $m=3$, given in \cite{Feldman}.
\end{remark}

\subsection{Connection to Erasure Decoding}
For the binary erasure channel (BEC), the performance of belief propagation (BP), or its equivalent, the peeling algorithm, has been extensively studied. The peeling algorithm can be seen as performing row and column permutations to triangularize a submatrix of $H$ consisting of the columns corresponding to the erased bits. It is known that the BP and peeling decoders succeed on the BEC if and only if the set of erased bits does not contain a stopping set. 

Feldman \emph{et al.} have shown in \cite{Feldman} that LP decoding and BP decoding are equivalent on the BEC. In other words, the success or failure of LP decoding can also be explained by stopping sets. In this subsection, we show a connection between LP decoding on the BEC and LP decoding on general MBIOS channels, allowing us to derive a sufficient condition for the failure of LP decoding on general MBIOS channels based on the existence of stopping sets.

\begin{theorem}
\label{connection to erasure}
Consider an LP decoding problem $LPD^0$ with LLR vector $\gamma$, $\gamma_i \neq 0\ \forall\ i\in\mathcal{I}$, resulting in the unique integral solution (i.e., the ML codeword) $u$. Also, let $\tilde{u}$ be the result of bit-based hard decisions on $\gamma$; i.e., $\tilde{u}_i=0$ if $\gamma_i>0$, and $\tilde{u}_i=1$ otherwise. Then, the set $\mathcal{E} \subseteq \mathcal{I}$ of positions where $u$ and $\tilde{u}$ differ, does not contain a stopping set.
\end{theorem}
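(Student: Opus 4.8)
The plan is to argue by contradiction. Suppose $\mathcal{E}$ contains a nonempty stopping set $\mathcal{S}$; I will exhibit a point of the fundamental polytope $\mathcal{P}$ whose cost is strictly less than $\gamma^T u$, contradicting the optimality of $u$ for $LPD^0$.

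First I would fix a perturbation direction. Since $u$ is integral and $u_i\neq\tilde u_i$ for every $i\in\mathcal{E}$, we have $u_i=1$ whenever $i\in\mathcal{E}$ and $\gamma_i>0$, and $u_i=0$ whenever $i\in\mathcal{E}$ and $\gamma_i<0$. Define $d\in\mathbb{R}^n$ by $d_i=1-2u_i$ for $i\in\mathcal{S}$ and $d_i=0$ otherwise; thus $d_i=+1$ if $u_i=0$ and $d_i=-1$ if $u_i=1$. A one-line computation gives $\gamma^T d=\sum_{i\in\mathcal{S}}\gamma_i(1-2u_i)=-\sum_{i\in\mathcal{S}}|\gamma_i|<0$, since all $\gamma_i\neq 0$ and $\mathcal{S}\neq\emptyset$. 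Hence $\gamma^T(u+\epsilon d)<\gamma^T u$ for every $\epsilon>0$, and the whole argument reduces to showing that $u+\epsilon d\in\mathcal{P}$ for all sufficiently small $\epsilon>0$.

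The core step is to verify this membership one check node at a time, and this is exactly where the stopping-set hypothesis enters. Fix a check node $j$, let $t=|\mathcal{N}(j)\cap\mathcal{S}|$, and let $u^{(j)}$ and $d^{(j)}$ denote the restrictions of $u$ and $d$ to the coordinates in $\mathcal{N}(j)$. Because $u\in\mathcal{C}$, $u^{(j)}$ has even weight, i.e. it is a local codeword of check $j$; and because $\mathcal{S}$ is a stopping set, $t\neq 1$. If $t=0$ then the restriction of $u+\epsilon d$ to $\mathcal{N}(j)$ is just $u^{(j)}$. If $t\geq 2$, I would express $u^{(j)}+\epsilon d^{(j)}$ as a convex combination of local codewords of check $j$, namely of $u^{(j)}$ itself together with the $\binom{t}{2}$ vectors obtained by flipping $u^{(j)}$ on a two-element subset of $\mathcal{N}(j)\cap\mathcal{S}$ (flipping an even number of bits preserves parity, so each of these is again a local codeword); assigning weight $\epsilon/(t-1)$ to each pair-flip and weight $1-\tfrac{t\epsilon}{2}$ to $u^{(j)}$ yields exactly $u^{(j)}+\epsilon d^{(j)}$, and all weights are nonnegative as soon as $\epsilon\leq 2/t$. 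By Lemma~\ref{convex hull}, this means every parity inequality of check $j$ holds at $u+\epsilon d$; and since each perturbed coordinate moves from an endpoint of $[0,1]$ toward the interior, $u+\epsilon d\in[0,1]^n$ as well. Taking $\epsilon$ smaller than $1$ and than $2/|\mathcal{N}(j)\cap\mathcal{S}|$ over all relevant $j$ gives $u+\epsilon d\in\mathcal{P}$.

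Combining the two parts, $u+\epsilon d$ is feasible for $LPD^0$ and has strictly smaller objective value than $u$, the claimed contradiction; hence $\mathcal{E}$ contains no nonempty stopping set. I expect the convex-combination construction to be the delicate point: one has to see that the condition $t\neq 1$ is precisely what permits the perturbed local restriction to be written via pair-flips — when $t=1$, the single affected coordinate in $\mathcal{N}(j)$ cannot be pushed off its integer value while keeping the projection inside the local codeword polytope, which is exactly the obstruction that stopping sets capture. (Alternatively, one can bypass the geometry and instead characterize which parity inequalities are active at the codeword vertex $u$ and check directly that none of them is violated along $d$; the bookkeeping there again collapses to $t\neq 1$.)
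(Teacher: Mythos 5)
Your proof is correct, but it takes a genuinely different route from the paper's. The paper assumes (by symmetry, without loss of generality) that $u$ is the all-zeros codeword, constructs an auxiliary erasure-decoding LP $LPD^{BEC}$ over the same polytope with LLR vector $\lambda_i=0$ on $\mathcal{E}$ and $\lambda_i=1$ elsewhere, shows by a short cost comparison that $u$ is also the unique optimum of that LP, and then invokes the result of Feldman \emph{et al.} that LP and BP decoding are equivalent on the BEC, so success of the erasure LP implies $\mathcal{E}$ contains no stopping set. You instead argue directly inside the fundamental polytope: assuming $\mathcal{E}$ contains a nonempty stopping set $\mathcal{S}$, you build the descent direction $d_i=1-2u_i$ on $\mathcal{S}$, check $\gamma^T d=-\sum_{i\in\mathcal{S}}|\gamma_i|<0$, and prove feasibility of $u+\epsilon d$ check-by-check via the pair-flip convex combination (weights $\epsilon/(t-1)$ on the $\binom{t}{2}$ two-bit flips and $1-t\epsilon/2$ on $u^{(j)}$, valid for $\epsilon\le 2/t$), where the stopping-set condition $t\neq 1$ is exactly what makes the construction possible; I verified the combination reproduces $u^{(j)}+\epsilon d^{(j)}$ and that Lemma~\ref{convex hull} then gives all parity inequalities at check $j$. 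What each approach buys: the paper's argument is shorter given the LP/BP-on-BEC equivalence as a black box and reuses the $LPD^{BEC}$ construction conceptually elsewhere, but it leans on that external theorem and on the all-zeros reduction; your argument is self-contained, avoids both the symmetry reduction and the equivalence result, needs only optimality of $u$ (strict cost decrease), and makes explicit where the combinatorial hypothesis enters. Both are valid proofs of the statement.
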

\begin{proof}
Let's assume, without loss of generality, that $u$ is the vector of all-zeroes, in which case we will have
\begin{equation}
\label{E based on gamma}
\mathcal{E}= \big\{i\in \mathcal{I} | \gamma_i < 0\big\}. 
\end{equation}
We form an LP erasure decoding problem $LPD^{BEC}$ with $u$ as the transmitted codeword and $\mathcal{E}$ as the set of erased positions. $LPD^{BEC}$ has the same feasible space $\mathcal{P}$ as $LPD^0$, but has a new LLR vector $\lambda$, defined such that $\forall\ i \in \mathcal{I}$,
\begin{equation}
\label{lambda definition}
\setlength{\nulldelimiterspace}{0pt}
\lambda_i=
\left\{\begin{array}[\relax]{ll}
0 & \text{if}\ \ i \in \mathcal{E}, \\
1 & \text{otherwise},
\end{array}\right. 
% \hspace{0.18in} \forall\ i \in \mathcal{I}
\end{equation}
Clearly, since $\mathcal{P}\subseteq [0,1]^n$, we have $\lambda^T v \geq 0, \ \forall\ v\in\mathcal{P}$. We prove the theorem by showing that the all-zeroes vector $u$ is the unique solution to $LPD^{BEC}$, as well. 

Assume that there is another vector $v \in \mathcal{P}$ such that we have 
\begin{equation}
\label{BEC, cost of v}
\lambda^T v = \lambda^T u = 0.
\end{equation}
Combining (\ref{lambda definition}) and (\ref{BEC, cost of v}) yields
\begin{align}
\sum\limits_{i\in \mathcal{I}\backslash \mathcal{E}} v_i =0,
\end{align}
implying that $v_i=0, \ \forall\ i\in \mathcal{I}\backslash \mathcal{E}$.
Therefore, using (\ref{E based on gamma}), the cost of the vector $v$ for $LPD^0$ will be
\begin{align}
\gamma^T v &= \sum\limits_{i\in \mathcal{E}} \gamma_i v_i \nonumber\\
&\leq 0 = \gamma^T u,
\end{align}
with equality if and only if $v_i=0, \ \forall\ i\in\mathcal{I}$. Since, by assumption, $u$ is the unique solution to $LPD^0$, we must have $v=u =[0,\ldots, 0]^T$. Hence, $u$ is also the unique solution to $LPD^{BEC}$. Finally, due to the equivalence of LP and BP decodings on the BEC, we conclude that $\mathcal{E}$ does not contain a stopping set.
\end{proof}

Theorem \ref{connection to erasure} will be used later in the paper to design an efficient way to solve the systems of linear equations we encounter in LP decoding.

\subsection{Simulation Results}
We present simulation results for ALP, MALP-A, and MALP-B decoding of random $(3,6)$-regular
LDPC codes, where the cycles of length four are removed from the Tanner graphs of the codes. The
simulations are performed in an AWGN channel with the SNR of $2$ dB (the threshold of belief-propagation decoding for the ensemble of $(3,6)$-regular codes is $1.11$ dB), and include 8 different lengths, with 1000 trials at each length.

In Fig. \ref{Hist_iter}, we have plotted the histograms of the number of iterations using the three algorithms for length $n=480$. The first column of histograms includes the results of all the decoding instances, while the second and third columns only include the decoding instances with integral and fractional outputs, respectively. From this figure, we can see that when the output is integral (second column), the three algorithms have a similar behavior, and they all converge in less that 15 iterations. On the other hand, when the output is fractional (third column), the typical numbers of iterations are 2-3 times higher for all algorithms, so that we observe two almost non-overlapping peaks in the histograms of the first columns.

\begin{figure}
\centering
\includegraphics[width=4.0 in] {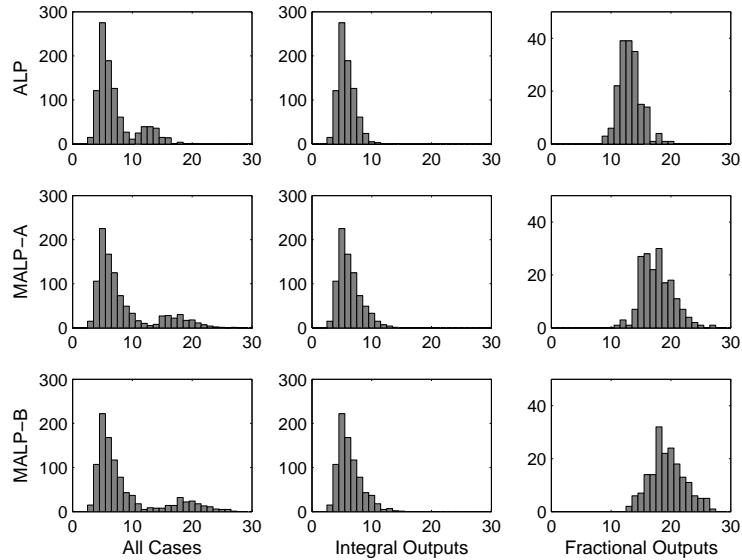}
\caption{The histograms of the number of iterations for ALP, MALP-A, and MALP-B decoding for a random $(3,6)$-regular LDPC code of length 480 at SNR = 2 dB. The left, middle, and right columns respectively correspond to the results of all decoding instances, decodings with integral outputs, and decodings with fractional output.}
\label{Hist_iter}
\end{figure}

In Fig. \ref{Iter_vs_n}, the average numbers of iterations of the three algorithms are plotted for both integral and fractional decoding outputs versus the code length. As a measure of the deviation of the results from the mean, we have also included the $95\%$ one-sided confidence upper bound for each curve, which is defined as the smallest number which is higher than at least $95\%$ of the values in the population. We can observe that the number of iterations for MALP-A and MALP-B decoding are significantly higher that that of ALP when the output is fractional. On the other hand, for decoding instances with integral outputs, where the LP decoder is successful in finding the ML codeword, the increase in the number of iterations for the modified ALP decoders relative to the ALP decoder is very small. Hence, the MALP decoders pay a small price in terms of the number of iterations in exchange for obtaining the single-constraint property. 

\begin{figure}
\centering
\includegraphics[width=4.3 in] {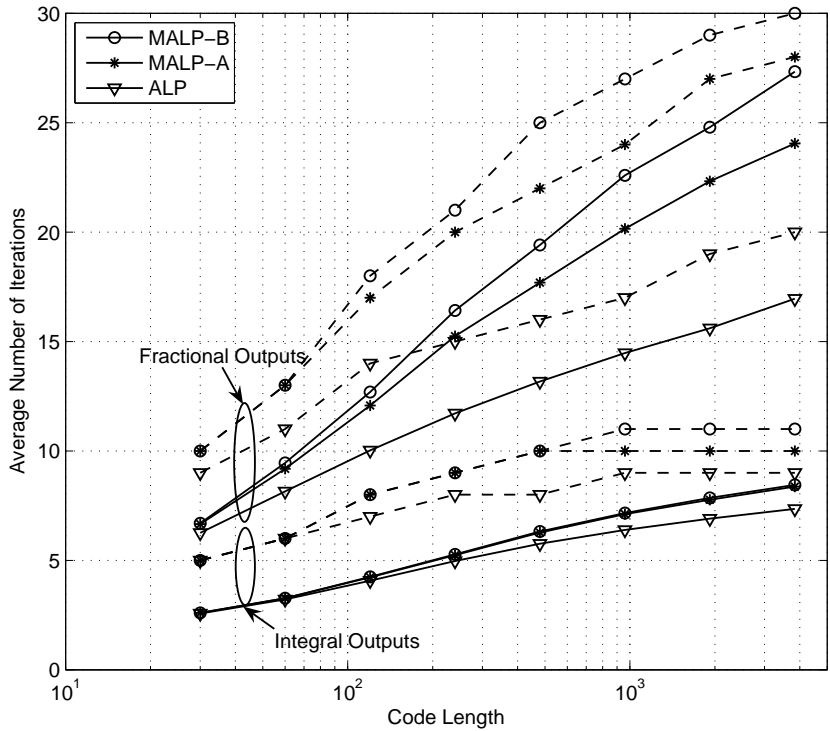}
\caption{The number of iterations of ALP, MALP-A, and MALP-B decoding versus code length for random $(3,6)$-regular LDPC codes at SNR = 2 dB. The solid and dashed curves represent, respectively, the average values and the $95\%$ one-sided confidence upper bounds.}
\label{Iter_vs_n}
\end{figure}

Moreover, our simulations indicate that the size of the largest LP that is solved in each MALP-A or MALP-B decoding problem is smaller on average than that of ALP decoding by $17\%$ for integral outputs and $30\%$ for fractional outputs.

%***********************************************
%*                                             *
%***********************************************
\section{Solving the LP Using the Interior Point Method}
General-purpose LP solvers do not take advantage of the particular structure of the optimization problems arising in LP decoding, and, therefore, using them can be highly inefficient. In this and the next sections, we investigate how LP algorithms can be implemented efficiently for LP decoding. The two major techniques for linear optimization used in most applications are Dantzig's simplex algorithm \cite{Dantzig} and the interior point methods.

\subsection{Simplex vs. Interior-Point Algorithms}
The simplex algorithm takes advantage of the fact that the solution to an LP is at one of the vertices of the feasible polyhedron. Starting from a vertex of the feasible polyhedron, it moves in each iteration (pivot) to an adjacent vertex, until an optimal vertex is reached. Each iteration involves selecting an adjacent vertex with a lower cost, and computing the size of the step to take in order to move to that edge, and these are computed by a number of matrix and vector operations.

Intertior-point methods generally move along a path within the interior of the feasible region. Starting from an interior point, interior point methods approximate the feasible region in each iteration, and take a Newton-type step towards the next point, until they get to the optimum point. Computation of these steps involves solving a linear system.

The complexity of an LP solver is determined by the number of iterations it takes to converge and the average complexity of each iteration. The number of iterations of the simplex algorithm has been observed to be polynomial (superlinear), on average, in the problem dimension $n$, while its worst-case performance can be exponential. An intuitive way of understanding why the average number of simplex pivots to successfully solve an LP decoding problem is at least linear in $n$ is to note that each pivot makes one basic primal variable nonbasic (i.e. sets it to zero) and makes one nonbasic variable basic (i.e. possibly increases it from zero). Hence, starting from an initial point, it should generally take at least a constant times $n$ pivots to arrive at a point corresponding to a binary codeword. Therefore, even if the computation of each simplex iteration were done in linear time, one could not achieve a running time better that $O(n^2)$, unless the simplex method is fundamentally revised. 

In contrast to the simplex algorithm, for certain classes of iterior-point methods, such as the path-following algorithm, the worst-case number of iterations has been shown to be $O(\sqrt{n})$, although these algorithms typically converge in $O(\log n)$ iterations \cite{Bertsimas}. Therefore, if the Newton step at each iteration can be computed efficiently, taking advantage of the sparsity and structure in the problem, one could obtain an algorithm that is faster than the simplex algorithm for large-scale problems.

Interior-point methods consist of a variety algorithms, differing in the way the optimization problem is approximated by an unconstrained problem, and how the step is calculated at each iteration. One of the most successful classes of interior-point methods is the primal-dual path-following algorithm, which is most effective for large-scale applications. In the following subsection we present a brief review of this algorithm. For a more comprehensive description, we refer the reader to the literature on linear programming and interior-point methods.

\subsection{Primal-Dual Path-Following Algorithm}
For simplicity, in this section we assume that the LP problems that we want to solve are of the form (\ref{MALP matrix form}). However, by introducing a number of additional slack variables, we can modify all the expressions in a straighforward way to represent the case where both types of box constraints may be present for each variable.

We first write the LP problem with $q$ variables and $p$ constraints in the ``augmented'' form

\begin{equation}
\label{primal LP}
\textbf{Primal LP} \hspace{0.2in}
\setlength{\nulldelimiterspace}{0pt}
\left.\begin{IEEEeqnarraybox}[\relax][c]{l's}
\text{minimize} \hspace{0.2in} c^T x \\
\text{subject to} \hspace{0.18in} A x = b,\\
  \hspace{0.85in} x\geq 0.
\end{IEEEeqnarraybox}\right. \vspace{0.1in}
\end{equation}
% 		\begin{align*}
%% 		\label{primal LP}
% 		\textbf{Primal LP:} \hspace{0.35in}
% 		&\text{Minimize} & c^T x\\
% 		&\text{Subject to} & A x = b\\
% 		& & x\geq 0.
% 		\end{align*}
Here, to convert the LP problem (\ref{MALP matrix form}) into the form above, we have taken two steps. First, noting that each variable $u_i$ in (\ref{MALP matrix form}) is subject to exactly one box constraint of the form $u_i\geq 0$ or $u_i\leq 1$, we introduce the variable vector $x$ and cost vector $c$, such that for any $i=1,\ldots,n$, $x_i=u_i$ and $c_i=\gamma_i$ if the former inequality is included (i.e., $\gamma_i\geq 0$), and $x_i=1-u_i$ and $c_i=-\gamma_i$, otherwise. Therefore, the box constraints will all have the form $x_i\geq 0$, and the coefficients of the parity inequalities will also change correspondingly. Second, for any $j=1,\ldots, p,$ we convert the parity inequality  $A_{j\diamond} x \leq b_j$ in (\ref{MALP matrix form}), where $A_{j\diamond}$ denotes the $j$th row of $A$, to a linear equation $A_{j\diamond} x + x_{n+j}= b_j,$ by introducing $p$ nonnegative slack variables $x_{n+1},\ldots,x_{q}$, where $q=n+p$, with corresponding coefficients equal to zero in the cost vector, $c$. We will sometimes refer to the first $n$ (non-slack) variables as the \emph{standard variables}. The dual of the primal LP has the form

\begin{equation}
\label{dual LP}
\textbf{Dual LP} \hspace{0.2in}
\setlength{\nulldelimiterspace}{0pt}
\left.\begin{IEEEeqnarraybox}[\relax][c]{l's}
\text{minimize} \hspace{0.2in} b^T y\\
\text{subject to} \hspace{0.18in} A^T y + z = c,\\
  \hspace{0.85in} z\geq 0,
\end{IEEEeqnarraybox}\right. \vspace{0.1in}
\end{equation}
%		\begin{align*}
%% 		\label{dual LP}
%		\textbf{Dual LP:} \hspace{0.35in}
%		&\text{Minimize} & b^T y\\
%		&\text{Subject to} & A^T y + z = c\\
%		& & z\geq 0.
%		\end{align*}
where $y$ and $z$ are the dual standard and slack variables, respectively.

The first step in solving the primal and dual problems is to remove the inequality constraints by introducing logarithmic \emph{barrier terms} into their objective functions.\footnote{Because of this step, interior-point methods are sometime referred to in the literature as barrier methods.} The primal and dual objective functions will thus change to $c^T x - \mu \sum_{i=1}^q \log x_i$ and $b^T y - \mu \sum_{i=1}^q \log z_i$, respectively, for some $\mu>0$, resulting in a familiy of convex nonlinear barrier problems $P(\mu)$, parameterized by $\mu$, that approximate the original linear program. Since the logarithmic term forces $x$ and $z$ to remain positive, the solution 
to the barrier problem is feasible for the primal-dual LP, and it can be shown that as $\mu \to 0$,  it approaches the solution to the LP problem. The key idea of the path-following algorithm is to start with some $\mu>0$, and reduce it at each iteration, as we take one step to solve the barrier problem.

The Karush-Kuhn Tucker (KKT) conditions provide necessary and sufficient optimality conditions for $P(\mu)$, and can be written as \cite[Chapter 9]{Bertsimas}
\begin{align}
\label{KKT}
Ax &= b\\
A^T y + z &= c\\
X Z e &= \mu e \label{slackness}\\
x, z &\geq 0,
\end{align}
where $X$ and $S$ are diagonal matrices with the entries of $x$ and $z$ on their diagonal, respectively, and $e$ denotes the all-ones vector. If we define
\begin{equation*}
F(s)=\begin{bmatrix}
Ax - b\\
A^T y + z - c\\
X Z e - \mu e \\
\end{bmatrix},
\end{equation*}
where $s=(x, y, z)$ is the current primal-dual iterate, the problem of solving $P(\mu)$ reduces to finding the (unique) zero of the multivariate function $F(s)$. In Newton's method, $F(s)$ is iteratively approximated by its first order Taylor series expansion around  $s=s^k$
\begin{equation}
\label{Taylor}
F(s^k+\Delta s^k) \approx F(s^k) + J(s^k) \Delta s^k,
\end{equation}
where $J(s)$ is the Jacobian matrix of $F(s)$. The Newton direction $\Delta s^k = (\Delta x^k, \Delta y^k, \Delta z^k)$ is obtained by setting the right-hand side of (\ref{Taylor}) to zero, resulting in the following system of linear equations:
\begin{equation}
\label{matrix equation}
\begin{bmatrix}
A & 0 & 0\\
0 & A^T & I\\
Z_k & 0 & X_k
\end{bmatrix}
\begin{bmatrix}
\Delta x^k\\
\Delta y^k\\
\Delta z^k
\end{bmatrix}
=
\begin{bmatrix}
r_b\\
r_c\\
r_e
\end{bmatrix}
\end{equation}
where $r_b=b - Ax^k$, $r_c=c - A^T y^k - z^k$, and $r_e= \mu^k e - X_k Z_k e$ are the residuals of the KKT equations (\ref{KKT}), and $\mu^k$ is the value of $\mu$ at iteration $k$. 
If we start from a primal and dual feasible point, we will not need to compute $r_b$ and $r_c$, as they will remain zero throughout the algorithm. However, for sake of generality, here we do not make any feasibility assumption, in order to have the flexibility to apply the equations in the general, possibly infeasible case.

The solution to the linear system (\ref{matrix equation}) is given by
\begin{align}
(A D_k^2 A^T) \Delta y^k & = r_b + A D_k^2 r_c - A Z_k^{-1} r_e, \label{Delta y}\\
\Delta x^k & = D_k^2 A^T \Delta y^k - D_k^2 r_c  + Z_k^{-1} r_e, \label{Delta x}\\
\Delta z^k & = X_k^{-1} (r_e - Z \Delta x^k) \label{Delta z},
\end{align}
where 
\begin{equation}
\label{definition of D2}
D_k^2 = X_k Z_k^{-1}.
\end{equation}
To simplify the notation, we will henceforth drop the subscript $k$ from $D_k$, but it should be noted that $D$ is a function of the iteration number, $k$. Having the Newton direction, the solution is updated as
\begin{align*}
x^{k+1} &= x^k + \beta_P^k \Delta x^k,\\
y^{k+1} &= y^k + \beta_D^k \Delta y^k,\\
z^{k+1} &= z^k + \beta_D^k \Delta z^k,
\end{align*}
and the primal and dual step lengths, $\beta_P^k, \beta_D^k \in [0 , 1]$, are chosen such that all the entries of $x$ and $z$ remain nonnegative.

Since we are interested in solving the LP and not the barrier program $P(\mu)$ for a particular $\mu$, rather than taking many Newton steps to approach the solution to $P(\mu)$, we reduce the value of $\mu$ each time a Newton step is taken, so that barrier program gives a better approximation of the LP. A reasonable updating rule for $\mu$ is to make it proportional to the duality gap $g_d \triangleq (x^k)^T z^k$, that is
\begin{equation}
\label{update mu}
\mu^k = \frac{(x^k)^T z^k}{q}.
\end{equation}

The primal-dual path-following algorithm described above will iterate until the duality gap becomes sufficiently small; i.e. $(x^k)^T z^k < \epsilon$. It has been shown that with a proper choice of the step lengths, this algorithm takes $O\big( \sqrt{q} \log(\epsilon_0\slash \epsilon) \big)$ to reduce the duality gap from $\epsilon_0$ to $\epsilon$.

In order to initialize the algorithm, we need some feasible $x^0>0$, $y^0$, and $z^0>0$. Obtaining such an initial point is nontrivial, and is usually done by introducing a few dummy variables, as well as a few rows and columns to the constraint matrix. This may not be desirable for a sparse LP, since the new rows and columns will not generally be sparse. Furthermore, if the Newton directions are computed based on the feasibility assumption; i.e. that $r_b =0$ and $r_c = 0$, round-off errors can cause instabilities due to the gradual loss of feasibility. As an alternative, an infeasible variation of the primal-dual path-following algorithm is often used, where any $x^0>0$, $y^0$, and $z^0>0$ can be used for initialization. This algorithm will simultaneously try to reduce the duality gap and the primal-dual feasibility gap to zero. Consequently, the termination criterion will change: we stop the algorithm if $(x^k)^T z^k < \epsilon$, $||r_b||< \delta_P$, and $||r_c|| < \delta_D$.

\subsection{Computing the Newton Directions: Preconditioned Conjugate Gradient Method}
The most complex step at each iteration of the interior-point algorithm in the previous subsection is to solve the ``normal'' system of linear equations in (\ref{Delta y}). While these equations were derived for the primal-dual path-following algorithm, in most other variations of interior-point methods, we encounter linear systems of similar forms, as well. 

Various algorithms for solving linear systems fall into two main categories of \emph{direct methods} and \emph{iterative methods}. While direct methods, such as Gaussian elimination attempt to solve the system in a finite number of steps, and are exact in the absence of rounding errors, iterative methods start from an initial guess, and derive a sequence of approximate solutions. Since the constraint matrix $A D^2 A^T$ in (\ref{Delta y}) is symmetric and positive definite, the most common direct method for solving this problem is based on computing the Cholesky decomposition of this matrix. However, this approach is inefficient for large-scale sparse problems, due to the computational cost of the decomposition, as well as loss of sparsity. Hence, in many LP problems, e. g. network flow linear programs, iterative methods such as the conjugate gradient (CG) method \cite{CGMref} are preferred.

Suppose we want to find the solution $x^*$ to a system of linear equations given by
\begin{equation}
\label{linear system}
Q x = w,
\end{equation}
where $Q$ is a $q \times q$ symmetric positive definite matrix. Equivalently, $x^*$ is the unique minimizer of the functional
\begin{equation}
\label{quadratic function}
f(x) = \frac{1}{2} x^T Q x - w^T x.
\end{equation}
We call two nonzero vectors, $u, v \in \mathcal R^q$, $Q$-conjugate if
\begin{equation}
u^T Q v = 0.
\end{equation}
The CG method is based on building a set of $Q$-conjugate basis vectors $h_1, \ldots, h_q$, and computing the solution $x^*$ as
\begin{equation}
x^* = \alpha_1 h_1, \ldots, \alpha_q h_q,
\end{equation}
where $\alpha_k = \frac{h_k^T w}{h_k^T Q h_k}$.
Hence, the problem becomes that of finding a suitable set of basis vectors. In the CG method, these vectors are found in an iterative way, such that at step $k$, the next basis vector $h_k$ is chosen to be the closest vector to the negative gradient of $f(x)$ at the current point $x^k$, under the condition that it is $Q$-conjugate to $h_1, \ldots, h_{k-1}$. For a more comprehensive description of this algorithm, the reader is referred to \cite{Saad book}.

While in principle the CG algorithm requires $q$ steps to find the exact solution $x^*$, sometimes a much smaller number of iterations provides a sufficiently accurate approximation to the solution. The distribution of the eigenvalues of the coefficient matrix $Q$ has a crucial effect on the convergence behavior of the CG method (as well as many other iterative algorithms). In particular, it is shown that \cite[Chapter 6]{Saad book}
\begin{equation}
\|x^* - x^k\|_Q \leq 2 \big[ \frac{\sqrt{\kappa(Q)} - 1}{\sqrt{\kappa(Q)} + 1} \big]^k \|x^* - x^0\|_Q,
\end{equation}
where $\|x\|_Q = \sqrt{(x^T Q x)}$ and $\kappa(Q)$ is the spectral condition number of Q, i.e. the ratio of the maximum and minimum eigenvalues of Q. 
Using this result, the number of iterations of the CG method required to reduce $\|x^* - x^k\|$ by a certain factor from its initial value can be upper-bounded by a constant times $\sqrt{\kappa(Q)}$. We henceforth call a matrix $Q$ ill-conditioned, in loose terms, if CG converges slowly in solving (\ref{linear system}).

In the interior-point algorithm, the spectral behavior of $Q = A D^2 A^T$ changes as a function of the diagonal elements $d_1,\ldots, d_q,$ of $D$, which are, as described in the previous subsection, the square roots of the ratios between the primal variables $\{x_i\}$ and the dual slack variables $\{z_i\}$. In Fig. \ref{evolution of d_i}, the evolution of the distributions of $\{x_i\}$, $\{z_i\}$, and $\{d_i\}$ through the iterations of the interior-point algorithm is illustrated for an LP subproblem of an MALP decoding instance. We can observe in this figure that $x_i$ and $z_i$ are distributed in such a way that the product $x_i z_i$ is relatively constant over all $i=1,\ldots,q$. This means that, although the path-following algorithm does not completely solve the barrier problems defined in IV-B, the condition (\ref{slackness}) is approximately satisfied for all $i$. A consequence of this, which can also be observed in Fig. \ref{evolution of d_i}, is that
\begin{equation}
d_i \approx \frac{1}{\sqrt{\mu}} x_i,\ \forall\ i=1, \ldots, q.
\end{equation}
As the iterates of the interior-point algorithm become closer to the solution and $\mu$ approaches zero, many of the $d_i$'s take very small or very large values, depending on the value of the corresponding $x_i$ in the solution. This has a negative effect on the spectral behavior of $Q$, and as a result, on the convergence of the CG method. 
\begin{figure}
\centering
\includegraphics[width=5.0 in] {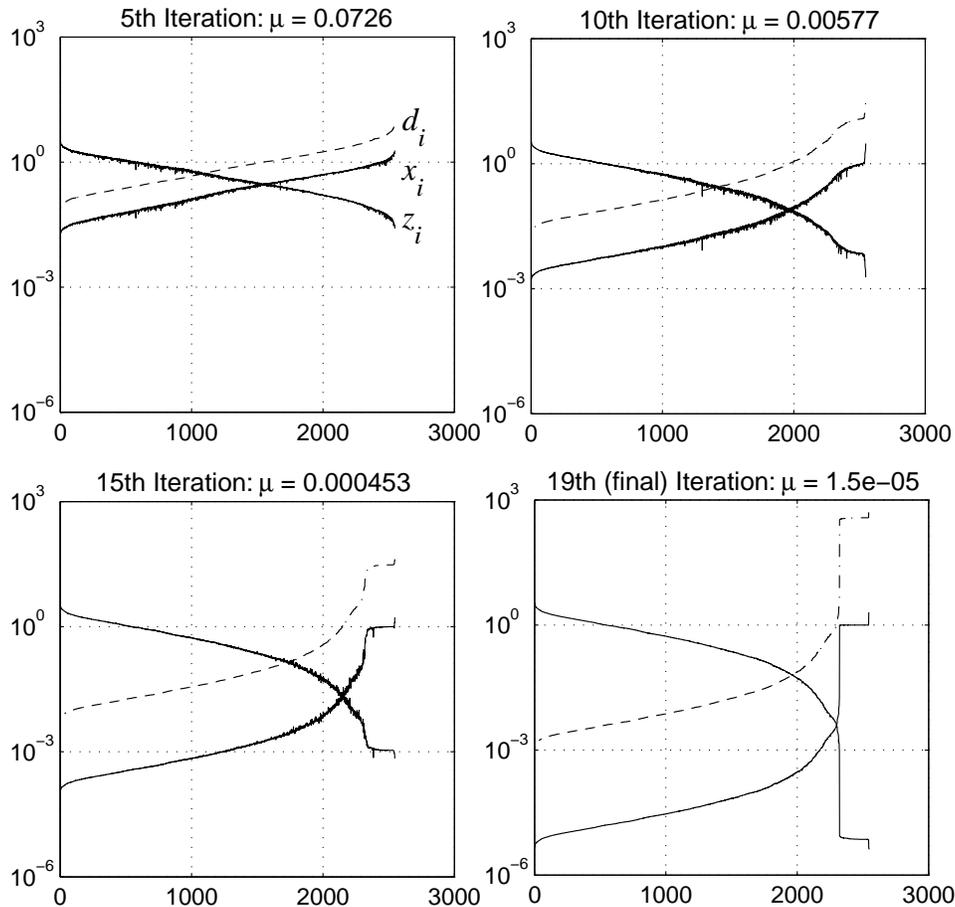}
\caption{The parameters $d_i$, $x_i$, and $z_i$, for $i=1, \ldots, q$ at four iterations of the interior-point method for an LP subproblem of MALP decoding with $n=1920$, $p=627$, $q=2547$. The variable indices, $i$, (horizontal axis) are permuted to sort $d_i$ in increasing order.}
\label{evolution of d_i}
\end{figure}

When the coefficient matrix $Q$ of the system of linear equations is ill-conditioned, it is common to use preconditioning. In this method, we use a symmetric positive-definite matrix $M$ as an approximation of $Q$, and instead of (\ref{linear system}), we solve the equivalent preconditioned system
\begin{equation}
\label{preconditioned system}
M^{-1} Q x = M^{-1} w.
\end{equation}
We hence obtain the preconditioned conjugate gradient (PCG) algorithm, summarized as Algorithm \ref{PCG Algorithm}.

\begin{algorithm}
\caption{Preconditioned Conjugate Gradient (PCG)}
\label{PCG Algorithm}
\begin{algorithmic}[1]
\STATE Compute an initial guess $x^0$ for the solution;
\STATE $r^0 = w - Q x^0$;
\STATE Solve $M z^0 = r^0$;
\STATE $h^0 = z^0$;
\FOR{$i=0, \ldots,$ until convergence}
\STATE $l^i = Q h^i$;
\STATE $\alpha_i = (z^i)^T r^i / (h^i)^T l^i$;
\STATE $x^{i+1} = x^i + \alpha^i h^i$;
\STATE $r^{i+1} = r^i - \alpha^i l^i$;
\STATE Solve $M z^{i+1} = r^{i+1}$;
\STATE $\nu^i = (z^{i+1})^T r^{i+1} / (z^i)^T r^i$;
\STATE $h^{i+1} = z^{i+1} + \nu^i h^i$;
\ENDFOR
\end{algorithmic}
\end{algorithm}
% 
% 
% \begin{algorithm} [Preconditioned Conjugate Gradient]
% \label{PCG Algorithm}
% \end{algorithm}
% \begin{enumerate}
% \item Compute an initial guess $x^0$ for the solution;
% \item $r^0 = w - Q x^0$;
% \item Solve $M z^0 = r^0$;
% \item $p^0 = z^0$;
% \item For $i=0, \ldots,$ until convergence Do:
% \item \hspace{0.2in} $q^i = Q p^i$;
% \item \hspace{0.2in} $\alpha_i = (z^i)^T r^i / (p^i)^T q^i$;
% \item \hspace{0.2in} $x^{i+1} = x^i + \alpha^i p^i$;
% \item \hspace{0.2in} $r^{i+1} = r^i - \alpha^i q^i$;
% \item \hspace{0.2in} Solve $M z^{i+1} = r^{i+1}$;
% \item \hspace{0.2in} $\nu^i = (z^{i+1})^T r^{i+1} / (z^i)^T r^i$;
% \item \hspace{0.2in} $p^{i+1} = z^{i+1} + \nu^i p^i$;
% \item End Do;
% \end{enumerate}

In order to obtain an efficient PCG algorithm, we need the preconditioner $M$ to satisfy two requirements. First, $M^{-1} Q$ should have a better spectral distribution than $Q$, so that the preconditioned system can be solved faster than the original system. Second, it should be inexpensive to solve $M x = z$, since we need to solve a system of this form at each step of the preconditioned algorithm. Therefore, a natural approach is to design a preconditioner which, in addition to providing a good approximation of $Q$, has an underlying structure that makes it possible to solve $M x = z$ using a direct method in linear time. 

One important application of the PCG algorithm is in interior-point implementations of LP for minimum-cost network flow (MCNF) problems. For these problems, the constraint matrix $A$ in the primal LP corresponds to the node-arc adjacency matrix of the network graph. In other words, the LP primal variables represent the edges, each constraint is defined for the edges incident to a node, and the diagonal elements, $d_1,\ldots, d_q,$ of the diagonal matrix $D$ can be interpreted as weights for the $q$ edges (variables). A common method for designing a preconditioner for $A D^2 A^T$ is to select a set $\mathcal{M}$ of $p$ columns of $A$ (edges) with large weights, and form $M= A_{\mathcal{M}} D_{\mathcal{M}}^2 A_{\mathcal{M}}^T$, where the subscript $\mathcal{M}$ for a matrix denotes a matrix consisting of the columns of the original matrix with indices in $\mathcal{M}$.

It is known that at a non-degenerate solution to an MCNF problem, the nonzero variables (i.e., the basic variables) correspond to a spanning tree in the graph. This means that, when the interior-point method approaches such a solution, the weights of all the edges, except those defining this spanning tree, will go to zero. Hence, a natural selection for $\mathcal{M}$ would be the set of indices of the spanning tree with the maximum total weight, which results in the maximum-weight spanning tree (MST) preconditioner. Finding the maximum-weight spanning tree in a graph can be done efficiently in linear time, and besides, due to the tree structure of the graph represented by $A_{\mathcal{M}}$, the matrix $M$ can be inverted in linear time as well.\footnote{Throughout the paper, we refer to solving a system of linear equations with coefficient matrix $M$, in loose terms, as inverting $M$, although we do not explicitly compute $M^{-1}$.} The MST has been observed in practice to be very effective, especially at the latter iterations of the interior-point method, when the operating point is close to the final solution.

%***************************************************%
\section{Preconditioner Design for LP Decoding}
Our framework for designing an effective preconditioner for LP decoding, similar to the MST preconditioner for MCNF problems, is to find a \emph{preconditioning set}, $\mathcal{M} \subseteq \{1,\ldots, q\}$, corresponding to $p$ columns of $A$ and $D$, resulting in $p \times p$ matrices $A_{\mathcal{M}}$ and $ D_{\mathcal{M}}$, such that $M= A_{\mathcal{M}} D_{\mathcal{M}}^2 A_{\mathcal{M}}^T$ is both easily invertible and a good approximation of $Q = A D^2 A^T$. To satisfy these requirements, it is natural to select $\mathcal{M}$ to include the variables with the highest weights, $\{d_i\}$, while keeping $A_\mathcal{M}$ and $D_\mathcal{M}$ full rank and invertible in $O(q)$ time. Then, the solution $z^{i+1}$ to $M z^{i+1} = r^{i+1}$ in the PCG algorithm can be found by sequentially solving $A_{\mathcal{M}} f_1 = r^{i+1}$, $D_{\mathcal{M}}^2 f_2 = f_1$, and $A_{\mathcal{M}}^T z^{i+1} = f_2$, for $f_1$, $f_2$, and $z^{i+1}$, respectively.

We are interested in having a graph representation for the constraints and variables of a linear program of the form (\ref{primal LP}) in the LP decoding problem, such that the selection of a desirable $\mathcal{M}$ can be interpreted as searching for a subgraph with certain combinatorial structures. 

\begin{definition}
\label{extended Tanner}
Consider an LP of the form (\ref{primal LP}) with $p$ constraints and $q$ variables, where $x_{n+1}, \ldots, x_q$ are slack variables. The \emph{extended Tanner graph} of this LP is a bipartite graph consisting of $q$ \emph{variable nodes} and $p$ \emph{constraint nodes}, such that variable node $i$ is connected to constraint node $i$ if $x_i$ is involved in the $j$th constraint; i.e., $A_{i,j}$ is nonzero.
\end{definition}

For the linear programs in the MALP decoding algorithms, since each constraint is derived from a unique check node of the original Tanner graph, the extended Tanner graph will be a subgraph of the Tanner graph, with the addition of $q$ degree-1 (slack) variable nodes, each connected to one of the constraint nodes. In general, for an iteration of MALP decoding of a code with an $m\times n$ parity-check matrix, the extended Tanner graphs would contain $p \leq m$ constraint nodes, $n$ variable nodes corresponding to the standard variables (bit positions), and $p$ slack variable nodes. As extended Tanner graphs are special cases of Tanner graphs, they inherit all the combinatorial concepts defined for Tanner graphs, such as stopping sets. A small example of an extended Tanner graph is given in Fig. \ref{Extended Tanner fig}.

\begin{figure}
\centering
\includegraphics[width=2.2 in] {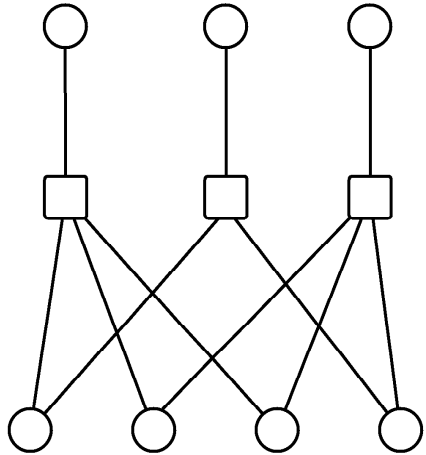}
\caption{An extended Tanner graph for an LP problem with $n=4$, $p=3$, and $q=7$.}
\label{Extended Tanner fig}
\end{figure}

\subsection{Preconditioning via Triangulation}
For a sparse constraint matrix, $A$, a sufficient condition for $A_\mathcal{M}$ and $A_\mathcal{M}^T$ to be invertible in $O(q)$ time is that $A_\mathcal{M}$ can be made upper or lower triangular, with nonzero diagonal elements, using column and/or row permutations. We call a preconditioning set $\mathcal{M}$ that  satisfies this property a \emph{triangular set}. Once an upper- (lower-) triangular form $A_\mathcal{M}^\triangle{}$ of $A_\mathcal{M}$ is found, we start from the last (first) row of $A_\mathcal{M}^\triangle{}$, and, by taking advantage of the sparsity, solve for the variable corresponding to the diagonal element of each row recursively in $O(1)$ time. It is not difficult to see that there always exists at least one triangular set for any LP decoding problem; one example is the set of columns corresponding to the slack variables, which results in a diagonal $A_\mathcal{M}$.

As a criterion for finding the best approximation $A_\mathcal{M} D_\mathcal{M}^2 A_\mathcal{M}^T$ of $A D^2 A^T$, we search for the triangular set that contains the columns with the highest weights, $d_i$. One can consider different strategies of scoring a triangular set from the weights of its members, e.g., the sum of the weights, or the largest value of minimum weight. It is interesting to study as a future work whether given any such metric, the ``maximum-weight'' (or optimal) triangular set can be found in polynomial time. However, in this work, we propose a (suboptimal) greedy approach, which is motivated by the properties of the LP decoding problem.

The problem of bringing a parity-check matrix into (approximate) triangular form has been studied by Richardson and Urbanke \cite{Efficient encoding} in the context of the encoding of LDPC codes. The authors proposed a series of greedy algorithms that are similar to the peeling algorithm for decoding in the binary erasure channel: repeatedly select a nonzero entry (edge) of the matrix (graph) lying on a degree-1 column or row (variable or check node), and remove both the column and row of this entry from the matrix. They showed that parity-check matrices that are optimized for erasure decoding can be made almost triangular using this greedy approach. 
It is important to note that this combinatorial approach only relies on the placement of the nonzero entries of the matrix, rather than their values. 

The fact that the constraint matrices of the LP problems in MALP decoding have structure similar to the corresponding parity-check matrix motivates the use of a greedy algorithm analogous to those in \cite{Efficient encoding} for triangulating the matrix $A$. 
However, this problem is different from the encoding problem, in that we are not merely interested in making $A$ triangular, but rather, we look for the triangular submatrix with the maximum weight. In fact, as mentioned earlier, finding one triangular form of $A$ is trivial, due to the presence of the slack variables. Here, we present three greedy algorithms to search for the MTS, one of which is related to the algorithms of Richardson and Urbanke. Throughout this section, we will also refer to the outputs of these (suboptimal) greedy algorithms, in loose terms, as the MTS, although they may not necessarily have the maximum possible weight.

\subsubsection{Incremental Greedy Search for the MTS}
Although an ideal preconditioning set would contain the $q$ columns of the matrix that have the $q$ highest weights, in reality, the square submatrix of $A$ composed of these $q$ columns is often neither triangular nor full rank. In the incremental greedy search for the MTS, we start by selecting the highest-weight column, and try to expand the set of selected columns by giving priority to the columns of higher weights, while maintaining the property that the corresponding submatrix can be made lower-triangular by column and row permutations.

Let $\mathcal{S}$ be a set of selected columns from $A$, where $|\mathcal{S}| \leq p$. In order to check whether the submatrix $A_\mathcal{S}$ can be made lower-triangular by column and row permutations, we can treat the variable nodes corresponding to $\mathcal{S}$ in the Tanner graph as erased bits, and use the peeling algorithm to decode them in $O(q)$ time. For completeness, this process, which we call the Triangulation Step, is described in Algorithm~\ref{triangulate}.

\begin{algorithm}
\caption{Triangulation Step}
\label{triangulate}
\begin{algorithmic}[1]
\STATE {\bf Input:} The set $\mathcal{S}$ with $|\mathcal{S}| = s \leq p$, and the matrix $A$;
\STATE {\bf Output:} An $s \times s$ lower-triangular submatrix $A_\mathcal{S}^\triangle{}$, if possible;
\STATE {\bf Initialization:} $\tilde{A} \leftarrow A_\mathcal{S}$, and initialize $col$ and $row$ as zero-length vectors;
\FOR{$k=1$ to $s$}
\STATE {\bf if} the minimum row degree in $\tilde{A}$ is not one {\bf then} $A_\mathcal{S}$ cannot be made lower-triangular by permutation; Declare {\bf Failure} and exit the algorithm;
\STATE Select any degree-1 row $j$ from $\tilde{A}$, and let $i$ be the index of the column that contains the only nonzero entry of row $j$;
\STATE $col \leftarrow {\scriptsize \begin{bmatrix} col \\ i \end{bmatrix}}$ , $row \leftarrow {\scriptsize \begin{bmatrix} row \\ j \end{bmatrix}}$;
\STATE Set all the entries in column $i$ and row $j$ of $\tilde{A}$ to zero;
\ENDFOR
\STATE Form $A_\mathcal{S}^\triangle{}$ by setting ${A_\mathcal{S}^\triangle{}}_{i,j} = {A_\mathcal{S}}_{col_i , row_j}$, $\forall\ i,j \in \{1,\ldots s\}$;
\end{algorithmic}
\end{algorithm}

Using the Triangulation Step as a subroutine, the incremental greedy search method, given by Algorithm~\ref{MTS inc}, first sorts the columns according to their corresponding weights, $d_i$ (or, alternatively, $x_i$), and initializes the preconditioning set, $\mathcal{M}$, as an empty set. Starting with the highest-weight column and going down the sorted list of column indices, it adds each column to $\mathcal{M}$ if the submatrix corresponding to the resulting
set can be made lower triangular using the Triangulation Step.

\begin{algorithm}
\caption{Incremental Greedy Search for the MTS}
\label{MTS inc}
\begin{algorithmic}[1]
\STATE {\bf Input:} $p \times q$ constraint matrix $A$, and the set of column weights, $d_1 \ldots d_q$;
\STATE {\bf Output:} A triangular set $\mathcal{M}$ and the $p \times p$ lower-triangular matrix $A_\mathcal{M}^\triangle{}$;
\STATE {\bf Initialization:} $\mathcal{M} \leftarrow \emptyset$, $i \leftarrow 0$;
\STATE Sort the column indices $\{1, \ldots, q\}$ according to their corresponding weights, $d_i$, in decreasing order, to obtain the permuted sequence $\pi_1, \ldots, \pi_q$, such that $d_{\pi_1} \geq \ldots \geq d_{\pi_q}$;
\WHILE{$i < q$ and $|\mathcal{M}|<p$}
\STATE $i \leftarrow i+1$, $\mathcal M \leftarrow \mathcal{M} \cup \{\pi_i\}$;
\IF {the Triangulation Step can bring the submatrix $A_\mathcal{S}$ into the lower-triangular form $A_\mathcal{S}^\triangle{}$} 
\STATE $\mathcal{M}\leftarrow \mathcal{S}$, $A_\mathcal{M}^\triangle{} \leftarrow A_\mathcal{M}^\triangle{}$;
\ENDIF
\ENDWHILE
\end{algorithmic}
\end{algorithm}

We claim that, due to the presence of the slack columns in $A$, Algorithm \ref{MTS inc} will successfully find a triangular set $\mathcal{M}$ of $p$ columns; i.e., it exits the while-loop (lines 5-10) only when $|\mathcal{M}| = p$. Assume, on the contrary, that the algorithm ends while $|\mathcal{M}| < p$, so that the matrix $A_\mathcal{M}$ is a $p \times |\mathcal{M}|$ lower-triangular matrix. This means that if we add any column $k \in \{1,\ldots, q\} \backslash \mathcal{M}$ to $\mathcal{M}$, it cannot be made lower triangular, since otherwise, column $k$ would have already been added to $|\mathcal{M}|$ when $\pi_i = k$ in the while-loop.\footnote{Note that if any set $\mathcal{S}$ of columns can be made lower triangular, any subset of these columns can be made lower triangular, as well.} 
However, this clearly cannot be the case, since we can produce a $p \times p$ lower-triangular matrix $A_\mathcal{M}^\triangle{}$, simply by adding the columns corresponding to the slack variables of the last $p - |\mathcal{M}|$ rows of $A_\mathcal{M}$. Hence, we conclude that $|\mathcal{M}| = p$.

\subsubsection{Column-wise Greedy Search for the MTS}
Algorithm \ref{MTS col} is a column-wise greedy search for the MTS. It successively adds the index of the maximum-weight degree-1 column of $A$ to the set $\mathcal{M}$, and eliminates this column and the row that shares its only nonzero entry. Matrix $A$ initially contains $p$ degree-1 slack columns, and at each iteration, one such column will be erased. Hence, there is always a degree-1 column in the residual matrix, and the algorithm proceeds until $p$ columns are selected. The resulting preconditioning set will correspond to an upper-triangular submatrix $A_\mathcal{M}$.

\begin{algorithm}
\caption{Column-wise Greedy Search for the MTS}
\label{MTS col}
\begin{algorithmic}[1]
\STATE {\bf Input:} $p\times q$ constraint matrix $A$, and the set of column weights $d_1, \ldots, d_q$;
\STATE {\bf Output:} A triangular set $\mathcal{M}$ and the upper-triangular matrix $A_\mathcal{M}^\triangle{}$;
\STATE {\bf Initialization:} $\tilde{A} \leftarrow A$, $\mathcal{M} \leftarrow \emptyset$, and initialize $col$ and $row$ as zero-length vectors;
\STATE Define and form $\mathcal{DEG}1$ as the index set of all degree-1 columns in $\tilde{A}$;
\FOR{$k=1$ to $p$}
\STATE Let $i \in \mathcal{DEG}1$ be the index of the (degree-1) column of $\tilde{A}$ with the maximum weight, $d_i$, and let $j$ be the index of the row that contains the only nonzero entry of this column;
\STATE $\mathcal{M} \leftarrow \mathcal{M}\cup i$,  $col \leftarrow {\scriptsize \begin{bmatrix} col \\ i \end{bmatrix}}$ , $row \leftarrow {\scriptsize \begin{bmatrix} row \\ j \end{bmatrix}}$;
\STATE Set all the entries in row $j$ of $\tilde{A}$ (including the only nonzero entry of column $i$) to zero;
\STATE Update $\mathcal{DEG}1$ from the residual matrix, $\tilde{A}$;
\ENDFOR
\STATE Form $A_\mathcal{M}^\triangle{}$ by setting ${A_\mathcal{M}^\triangle{}}_{i,j} = A_{col_i , row_j}$, $\forall\ i,j \in \{1,\ldots p\}$;
\end{algorithmic}
\end{algorithm}

\subsubsection{Row-wise Greedy Search for the MTS}
Algorithm \ref{MTS row} uses a row-wise approach for finding the MTS. In this method, we look at the set of degree-1 rows, add to $\mathcal{M}$ the indices of all the columns that intersect with these rows at nonzero entries, and eliminate these rows and columns from $A$. Unlike the column-wise method, it is possible that, at some iteration, these is no degree-1 row in the matrix. In this case, we repeatedly eliminate the lowest-weight column, until there is one or more degree-1 rows. 

\begin{algorithm}
\caption{Row-wise Greedy Search for the MTS}
\label{MTS row}
\begin{algorithmic}[1]
\STATE {\bf Input:} $p\times q$ constraint matrix $A$, and the set of column weights $d_1, \ldots, d_q$;
\STATE {\bf Output:} A triangular set $\mathcal{M}$ and the lower-triangular matrix $A_\mathcal{M}^\triangle{}$;
\STATE {\bf Initialization:} $\tilde{A} \leftarrow A$, $\mathcal{M} \leftarrow \emptyset$, and initialize $col$ and $row$ as zero-length vectors;
\STATE Define and form $\mathcal{DEG}1$ as the index set of all degree-1 rows in $\tilde{A}$;
\WHILE{$\tilde{A}$ is not all zeroes}
\IF{$|\mathcal{DEG}1|>0$}
\STATE Let $j \in \mathcal{DEG}1$ be any degree-1 row of $\tilde{A}$, and $i$ be the index of the column that contains the only nonzero entry of this row;
\STATE $\mathcal{M} \leftarrow \mathcal{M}\cup i$,  $col \leftarrow {\scriptsize \begin{bmatrix} col \\ i \end{bmatrix}}$ , $row \leftarrow {\scriptsize \begin{bmatrix} row \\ j \end{bmatrix}}$;
\STATE Set all the entries in column $i$ of $\tilde{A}$ (including the only nonzero entry of row $j$) to zero, and update $\mathcal{DEG}1$;
\ELSE
\STATE Let $i$ be the index of the nonzero column of $\tilde{A}$ with the minimum weight, $d_i$. Set all the entries in column $i$ to zero, and update $\mathcal{DEG}1$;
\ENDIF
\ENDWHILE
\STATE {\bf Diagonal Expansion:} For each row $j$ of $A$ that is not represented in $row$, append $j$ to $row$, and append $i=j+n$, i.e., the index of the corresponding slack column, to both $col$ and $\mathcal{M}$;
\STATE Form $A_\mathcal{M}^\triangle{}$ by setting ${A_\mathcal{M}^\triangle{}}_{i,j} = A_{col_i , row_j}$, $\forall\ i,j \in \{1,\ldots p\}$;
\end{algorithmic}
\end{algorithm}

In addition to this difference, the number of columns in $\mathcal{M}$ by the end of this procedure is often slightly smaller that $p$. Hence, we perform a ``diagonal expansion'' step at the end, where $p - |\mathcal{M}|$ columns corresponding to the slack variables are added to $\mathcal{M}$, while keeping it a triangular set. A problem with this expansion method is that, since the algorithm does not have a choice in selecting the slack variables added in this step, it may add columns that have very small weights. 

Let $A_{\mathcal{M}_1}^\triangle{}$ be the triangular submatrix obtained before the expansion step. 
As an alternative to diagonally expanding $A_{\mathcal{M}_1}^\triangle{}$ by adding slack columns, we can apply a ``triangular expansion.'' In this method, we form a matrix $\bar{A}$ consisting of the columns of $A$ that do not share any nonzero entries with the rows in vector $row$, and apply a column-wise or row-wise greedy search to this matrix in order to obtain a high-weight lower-triangular submatrix $A_{\mathcal{M}_2}^\triangle{}$. This requirement for forming $\bar{A}$ ensures that the resulting triangular submatrices $A_{\mathcal{M}_1}^\triangle{}$ and $A_{\mathcal{M}_2}^\triangle{}$ can be concatenated as
\begin{equation}
\begin{bmatrix}
A_{\mathcal{M}_1}^\triangle{} & 0\\
B & A_{\mathcal{M}_2}^\triangle{}\\
\end{bmatrix},
\end{equation}
to form a larger triangular submatrix of $A$. This process can be continued, if necessary, until a square $p \times p$ triangular matrix $A_{\mathcal{M}}^\triangle{} $ is obtained, although our experiments indicate that one expansion step is often sufficient to provide such a result. It is easy to see that this approach is potentially stronger than the diagonal expansion in Algorithm \ref{MTS row}, since it has the diagonal expansion as a special case.

\subsection{Implementation and Complexity Considerations}
To compute the running time of Algorithm \ref{MTS inc}, note that while Step 4 has $O(q \log q)$ complexity, the computational complexity of the algorithm is dominated by the Triangulation Step. This subroutine has $O(q)$ complexity, and is called $O(q)$ times in Algorithm \ref{MTS inc}, which makes the overall complexity $O(q^2)$. An interesting problem to investigate is whether we can simplify the triangulation process in line 7 to have sublinear complexity by exploiting the results of the previous round of triangulation, as stated in the following open problem concerning erasure decoding:

\emph{Open Problem:} Consider the Tanner graph corresponding to an arbitrary LDPC code of length $n$. Assume that a set $\mathcal{E}$ of bits are erased, and $\mathcal{E}$ does not contain a stopping set in the Tanner graph. Thus, the decoder successfully recovers these erased bits using the peeling algorithm (i.e., the triangulation Algorithm \ref{triangulate}). Now, we add a bit $i$ to the set of erased bits. Given $j$, $\mathcal{E}$, and the complete knowledge of
the process of decoding $\mathcal{E}$, such as the order in which the bits are decoded, and the check nodes used, is there an $o(n)$ scheme to verify if $\mathcal{E} \cup \{i\}$ can be decoded by the peeling algorithm?

In addition this potential simplification, it is possible to make a number of modifications to Algorithm \ref{MTS inc} in order to reduce its complexity. Let $s$ be the size of the smallest stopping set in the extended Tanner graph of $A$, which means that the submatrix formed by any $s-1$ columns can be made triangular. Then, instead of initializing $\mathcal{M}$ to be the empty set, we can immediately add the $s-1$ highest-weight columns to $\mathcal{M}$, since we are guaranteed that $A_\mathcal{M}$ can be made triangular. Moreover, at each iteration of the algorithm, we can consider $k>1$ column to be added to $\mathcal{M}$, in order to reduce the number of calls to the triangulation subroutine. The value of $k$ can be adaptively selected to make sure that the modified algorithm remains equivalent to Algorithm \ref{MTS inc}.

To assess the complexity of Algorithm \ref{MTS col}, we need to examine Steps 8 and 11 that involve column or row operations, as well as Steps 4, 6, and 9 that deal with the list of degree-1 columns. 
Since there is an $O(1)$ number of nonzero entries in each column or row of $A$, running Step 8 $p$ times (due to the for-loop), and deriving $A_\mathcal{M}^\triangle{}$ from $A$ in Step 11 each take $O(q)$ time. However, one should be careful in selecting a suitable data structure for storing the set $\mathcal{DEG}1$, since, in each cycle of the for-loop, we need to extract the element with the maximum weight, and add to and remove from this set an $O(1)$ number of elements. By using a binary heap data structure \cite{Knuth}, which is implementable as an array, all these (Steps 6 and 9) can be done in $O(\log q)$ time in the worst case. Also, the initial formation of the heap (Step 4) has $O(q)$ complexity. As a result, the total complexity of Algorithm \ref{MTS col} becomes $O(q \log q)$.

Similarly, in Algorithm \ref{MTS row}, we need a mechanism to extract the minimum-weight member of the set of remaining columns. While the heap structure mentioned above works well here, since no column is added to the set of remaining columns, we can alternatively sort the set of all columns by their weights as a preprocessing step with $O(q \log q)$ complexity, thus making the complexity of the while-loop linear. Since the complexity of steps 15 (diagonal expansion) and 16 are linear, as well, the total running time of Algorithm \ref{MTS row} will be $O(q \log q)$.

The process of finding a triangular preconditioner is performed at each iteration of the interior-point algorithm. Since the values of primal variables, $\{x_i\}$, do not substantially change in one iteration, we expect the maximum-weight triangular set at each iteration to be relatively close to that in the previous iteration. Consequently, an interesting area for future work is to investigate modifications of the proposed algorithms, where the knowledge of the MTS in the previous iteration of the interior-point method is exploited to improve the complexity of these algorithms.

\section{Analysis of the MTS Preconditioning Algorithms}
\subsection{Performance Analysis}
It is of great interest to study how the proposed algorithms perform as the problem size goes to infinity. We expect that a number of asymptotic results similar to those of Richardson and Urbanke in \cite{Efficient encoding} can be derived, e.g., showing that the greedy preconditioner designs perform well for capacity-approaching LDPC ensembles. However, since one of the main advantages of LP decoding over message-passing decoding is its geometrical structure that facilitates the analysis of its performance in the finite-length regime, in this work we focus on studying the proposed algorithms in this regime.

We will study the behavior of the proposed preconditioner in the later iterations of the interior-point algorithm, when the iterates are close to the optimum. This is justified by the fact that, as the interior-point algorithm approaches the boundary of the feasible region during its later iterations, many of the primal variables, $x_i$, and the dual slack variables, $z_i$, approach zero, thus deteriorating the conditioning of the matrix $Q = A D^2 A^T$. This is when a precoditioner is most needed.
In addition, we can obtain some information about the performance of the preconditioner in the later iterations by focusing on the optimal point of the feasible set.

Consider an LP problem in the augmented form (\ref{primal LP}) as part of ALP or MALP decoding, and assume that it has a unique optimal solution (although parts of our analysis can be extended to the case with non-unique solutions). We denote by the triplet $(x^*, y^*, z^*)$ the primal-dual solution to this LP, and by $(x, y, z)$ an intermediate iterate of the interior-point method. We can partition the set of the $q$ columns of $A$ into the \emph{basic set} 
\begin{equation}
\mathcal{B}=\{i | x_i^* >0\}
\end{equation}
and the \emph{nonbasic set}
\begin{equation}
\mathcal{N}=\{i | x_i^* =0\}.
\end{equation}

For brevity, we henceforth refer to the columns of the constraint matrix $A$ corresponding to the basic variables as the ``basic columns.'' It is not difficult to show that, for an LP with a unique solution, the number of basic variables, i.e., $|\mathcal{B}|$, is at most $p$. To see this, assume that $l$ of the standard variables $x_1^* \ldots x_n^*$ are nonzero, which means that $n-l$ box constraints of the form $x_i \geq 0$ are active at $x^*$. Since $x^*$ is a vertex defined by at least $n$ active constraints in the LP, we conclude that at least $l$ parity inequalities must be active at $x^*$, thus leaving at most $p-l$ nonzero slack variables. We call the LP \emph{nondegenerate} if $|\mathcal{B}| = p$, and \emph{degenerate} if $|\mathcal{B}| < p$.

It is known that the unique solution $(x^*, y^*, z^*)$ is ``strictly complementary'' \cite{Goldman}, meaning that for any $i \in \{1, \ldots, q\}$ either $x_i^*=0$ and $z_i^* >0$, or $x_i^*>0$ and $z_i^*=0$. Remembering from (\ref{definition of D2}) that $d_i = \sqrt{x_i / z_i}$, as the iterates of the interior-point algorithm approach the optimum, i.e., $\mu$ given in (\ref{update mu}) goes to zero, we will have
\begin{equation}
\label{limit of d_i}
\setlength{\nulldelimiterspace}{0pt}
\lim_{\mu \rightarrow 0} d_i =
\left\{\begin{IEEEeqnarraybox} [\relax][c]{l's}
\infty \hspace{0.2in} \text{if} \ \ i \in \mathcal{B}, \\
\ 0\ \hspace{0.3in} \text{if} \ \ i \in \mathcal{N},
\end{IEEEeqnarraybox}\right.
\end{equation}
Therefore, towards the end of the algorithm, the matrix $Q=A D^2 A^T$ will be dominated by the columns of $A$ and $D$ corresponding to the basic set. Hence, it is highly desirable to select a preconditioning set that includes all the basic columns, i.e., $\mathcal{B} \subseteq \mathcal{M}$, in which case $A_\mathcal{M} D_\mathcal{M}^2 A_\mathcal{M}^T$ becomes a better and better approximation of $Q$, as we approach the optimum of the LP. 
In the rest of this subsection, we will show that, when the solution to the LP is integral and $\mu$ is sufficiently small, this property can be achieved by low complexity algorithms similar to Algorithms \ref{MTS col} and \ref{MTS row}.

\begin{lemma}
\label{Basic set is triangular}
Consider the extended Tanner graph $\mathcal{T}^k$ for an LP subproblem $LP^k$ of MALP decoding. If the primal solution to $LP^k$ is integral, the set of variable nodes corresponding to the basic set, whose definition is based on the augmented form (\ref{primal LP}) of the LP, does not contain any stopping set.
\end{lemma}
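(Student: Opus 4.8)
\emph{Sketch of the argument.} The plan is to reduce the statement to Theorem~\ref{connection to erasure} by eliminating the slack variables from any hypothetical stopping set. Let $x^*$ be the optimal solution of $LP^k$ in the augmented form~(\ref{primal LP}), which is integral by hypothesis, and let $u^k$ be the corresponding vector of bit values; let $\tilde u$ be the bit-based hard decision on $\gamma$ (we assume, as is standard, that $\gamma_i\neq0$ for all $i$, so that $\tilde u$ is well defined). The first step is to identify $\mathcal{B}=\{i\mid x_i^*>0\}$ explicitly. For a standard index $i\le n$, the change of variables $x_i=u_i$ when $\gamma_i\ge0$ and $x_i=1-u_i$ when $\gamma_i<0$ shows that $x_i^*>0$ precisely when $u^k_i\neq\tilde u_i$; hence the basic standard variables form the set $\mathcal{E}^k:=\{i\in\mathcal{I}\mid u^k_i\neq\tilde u_i\}$. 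For a slack index $n+j$, $x_{n+j}^*>0$ precisely when the $j$th parity inequality of $LP^k$ is inactive at $u^k$; in particular, the slack node of any check node whose parity inequality is active at $u^k$ is nonbasic, i.e.\ does not lie in $\mathcal{B}$.

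Next I would argue by contradiction: suppose $\mathcal{S}\subseteq\mathcal{B}$ is a nonempty stopping set of $\mathcal{T}^k$, and split it as $\mathcal{S}=\mathcal{S}_{\mathrm{std}}\cup\mathcal{S}_{\mathrm{slk}}$ with $\mathcal{S}_{\mathrm{std}}=\mathcal{S}\cap\{1,\dots,n\}$. Every slack variable node has degree one in $\mathcal{T}^k$, being joined only to its own constraint node, so if $\mathcal{S}_{\mathrm{std}}=\emptyset$ then any $n+j\in\mathcal{S}$ makes constraint node $j$ have exactly one neighbour in $\mathcal{S}$, contradicting the stopping-set property; thus $\emptyset\neq\mathcal{S}_{\mathrm{std}}\subseteq\mathcal{E}^k$. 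Now restrict attention to the set $\mathcal{J}^k$ of check nodes whose parity inequality in $LP^k$ is active at $u^k$ (the set appearing in Theorem~\ref{properties of ALP}(d)). For $c\in\mathcal{J}^k$ the slack node $n+c$ is nonbasic, hence not in $\mathcal{S}$, so the neighbours of $c$ lying in $\mathcal{S}$ are exactly those in $\mathcal{N}(c)\cap\mathcal{S}_{\mathrm{std}}$; since $\mathcal{S}$ is a stopping set of $\mathcal{T}^k$ and $c$ is one of its constraint nodes, this number is not $1$. As this holds for all $c\in\mathcal{J}^k$, the set $\mathcal{S}_{\mathrm{std}}$ is a nonempty stopping set, contained in $\mathcal{E}^k$, of the Tanner graph induced by the check nodes $\mathcal{J}^k$.

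To close the argument I would invoke the earlier results. By Theorem~\ref{properties of ALP}(d), together with the standing uniqueness assumption, $u^k$ is the unique solution of the LP-decoding problem $LPD^k$ that has LLR vector $\gamma$ and the sub-Tanner graph on the check nodes $\mathcal{J}^k$; it is integral by hypothesis. Theorem~\ref{connection to erasure} applied to $LPD^k$ then says that $\mathcal{E}^k$, the set of positions where $u^k$ differs from the hard decision on $\gamma$, contains no nonempty stopping set in the Tanner graph of $\mathcal{J}^k$ --- which contradicts the conclusion of the previous paragraph. Hence no such $\mathcal{S}$ exists. The only real obstacle here is the possible presence of slack variables in a hypothetical stopping set; the rest is bookkeeping, and the device that removes the obstacle is the restriction to the active check nodes $\mathcal{J}^k$, whose slack variables are forced to be nonbasic and therefore cannot occur in $\mathcal{S}\subseteq\mathcal{B}$.
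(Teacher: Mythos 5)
Your proof is correct and follows essentially the same route as the paper's: both identify the basic standard variables with the hard-decision disagreement set $\mathcal{E}$, invoke Theorem~\ref{properties of ALP}(d) and Theorem~\ref{connection to erasure} on the subgraph of active check nodes, and dispose of the slack nodes by noting that slacks of active constraints are nonbasic while basic slacks hang off inactive constraints. The only difference is presentational: the paper runs the peeling decoder constructively (peel $\mathcal{B}_{std}$ via active checks, then the degree-one basic slacks), whereas you argue by contradiction directly from the stopping-set definition, which is logically the same argument.
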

\begin{proof}
Consider an erasure decoding problem $P^{BEC}$ on $\mathcal{T}^k$, where the basic variable nodes are erasures. We prove the lemma by showing that the peeling (or LP) decoder can successfully correct these erasures.

We denote by $u^*$ and $x^*$ the solutions to the primal LP in the (original) standard form (\ref{MALP matrix form}) and in the augmented form (\ref{primal LP}). From part c) of Theorem \ref{properties of ALP}, we know that $u^*$ is also the solution to a full LP decoding problem $LPD^k$ with the LLR vector $\gamma$ and the Tanner graph comprising the standard variable nodes and the active check nodes, $\mathcal{J}_{act}$. 

We partition the basic set $\mathcal{B}$ into $\mathcal{B}_{std}$ and $\mathcal{B}_{slk}$, the sets of basic standard variables and basic slack variables, respectively. We also partition the set of check nodes in $\mathcal{T}^k$ into $\mathcal{J}_{act}$ and $\mathcal{J}_{inact}$, the sets of check nodes that generate the active and inactive parity inequalities of $LP^k$, respectively. Clearly, the neighbors of the slack variable nodes in $\mathcal{B}_{slk}$ are the check nodes in $\mathcal{J}_{inact}$, since an inactive parity inequality has, by definition, a nonzero slack.

\emph{Step 1:} We first show that, even if we remove the check nodes in $\mathcal{J}_{inact}$ from $\mathcal{T}^k$, the set of basic standard variable nodes, $\mathcal{B}_{std}$, does not contain a stopping set. 

Remembering the conversion of the LP in the standard form (\ref{MALP matrix form}) with inequality constraints to the augmented form (\ref{primal LP}), we can write
\begin{align}
\label{B_std}
\mathcal{B}_{std} = \big\{i \in \mathcal{I}\ \big|\ (\gamma_i \geq 0\ ,\ u^*_i=1) \text{ or } (\gamma_i < 0\ ,\ u^*_i=0) \big\}.
\end{align}
Using, as in Theorem \ref{connection to erasure}, the notation $\tilde{u}$ for the result of bit-based hard decision on $\gamma$, one can see that $\mathcal{B}_{std}$ is identical to $\mathcal{E}$, the set of positions where $u^*$ and $\tilde{u}$ differ. Hence, knowing that $u^*$ is the solution to an LP decoding problem, and using Theorem \ref{connection to erasure}, we conclude that the set $\mathcal{B}_{std}$ does not contain a stopping set in the Tanner graph that only includes the check nodes in $\mathcal{J}_{act}$.

\emph{Step 2:} Now we return to $\mathcal{T}^k$, and consider solving $P^{BEC}$, where all the basic variables are erasures, using the peeling algorithm. Since the slack variables which are basic are connected only to the inactive check nodes, we know from Step 1 that the erased variables $\mathcal{B}_{std}$ can be decoded by only using the active check nodes $\mathcal{J}_{act}$. Once these variable nodes are peeled off the graph, we are left with the basic slack variable nodes, each of which is connected to a distinct check node in $\mathcal{J}_{inact}$. Therefore, the peeling algorithm can proceed by decoding all of these variables. This completes the proof.
\end{proof}

Lemma \ref{Basic set is triangular} shows that, under proper conditions, the submatrix $\tilde{A}$ of $A$ formed by only including the columns corresponding to the basic variables can be made lower triangular by column and row permutations. This suggests that looking for a maximum-weight triangular set is a natural approach for designing a preconditioner in MALP decoding. In particular, the following theorem shows that, under the conditions of Lemma \ref{Basic set is triangular}, the incremental greedy Algorithm \ref{MTS inc} indeed finds a preconditioning set that includes all such columns.

As the interior-point algorithm progresses, the basic variables approach 1, while the nonbasic variables approach zero. Hence, referring to (\ref{limit of d_i}), we see that after a large enough number of iterations, the $|\mathcal{B}|$ highest-weight columns of $A$ will correspond to the basic set $\mathcal{B}$. The following theorem shows that two of the proposed algorithms indeed find a preconditioning set that includes all such columns.

\begin{theorem}
\label{Optimality of MTS}
Consider an LP subproblem $LP^k$ of an MALP decoding problem. If the primal solution to $LP^k$ is integral, at the iterates of the interior-point method that are sufficiently close to the solution, both the Incremental Greedy Algorithm and the Row-wise Greedy Algorithm can successfully find a triangular set that includes all the columns corresponding to the basic set.
\end{theorem}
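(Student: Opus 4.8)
The argument rests on three facts already established. First, Lemma~\ref{Basic set is triangular}: when the solution to $LP^k$ is integral, the basic columns of $A$ carry no stopping set in the extended Tanner graph, equivalently $A_{\mathcal{B}}$ can be brought to lower-triangular form by row and column permutations. Second, the limiting behavior (\ref{limit of d_i}): as the iterates approach the (unique, strictly complementary) optimum, $d_i\to\infty$ for $i\in\mathcal{B}$ and $d_i\to 0$ for $i\in\mathcal{N}$, so for every iterate sufficiently close to the solution we have $\min_{i\in\mathcal{B}}d_i>\max_{i\in\mathcal{N}}d_i$; in particular the $|\mathcal{B}|$ columns of largest weight are exactly the basic columns. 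Third, $|\mathcal{B}|\le p$. I would fix such a late iterate and then handle the two algorithms in turn.

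For the Incremental Greedy Algorithm (Algorithm~\ref{MTS inc}), the sorted order of column indices then begins with a permutation $\pi_1,\dots,\pi_{|\mathcal{B}|}$ of $\mathcal{B}$. I would show by induction on $i\le|\mathcal{B}|$ that after the $i$-th pass $\mathcal{M}=\{\pi_1,\dots,\pi_i\}$: when $\pi_i$ is tested, the candidate set $\mathcal{S}=\{\pi_1,\dots,\pi_i\}$ is a subset of $\mathcal{B}$, and since any subset of a column set admitting a lower-triangular permutation admits one itself, Lemma~\ref{Basic set is triangular} forces the Triangulation Step (Algorithm~\ref{triangulate}) to succeed on $A_{\mathcal{S}}$, so $\pi_i$ is committed to $\mathcal{M}$. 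Thus $\mathcal{B}\subseteq\mathcal{M}$ after pass $|\mathcal{B}|$; since $\mathcal{M}$ never shrinks and the algorithm was already shown to exit only with $|\mathcal{M}|=p$, its output is a triangular set containing every basic column.

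For the Row-wise Greedy Algorithm (Algorithm~\ref{MTS row}), the plan is to prove that a basic column is never eliminated in the ``else'' branch (the minimum-weight elimination). Suppose otherwise: at some step the minimum-weight nonzero column $c$ of the residual matrix $\tilde A$ is basic. Then every nonzero column of $\tilde A$ has weight at least $d_c>\max_{i\in\mathcal{N}}d_i$, hence is itself basic; the set of nonzero columns of $\tilde A$ is therefore a nonempty subset of $\mathcal{B}$, so by Lemma~\ref{Basic set is triangular} it contains no stopping set, which forces some row of $\tilde A$ to have exactly one nonzero entry. That degree-$1$ row contradicts the hypothesis that the algorithm took the ``else'' branch. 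Hence each basic column leaves $\tilde A$ only through the ``if'' branch and is placed in $\mathcal{M}$, giving $\mathcal{B}\subseteq\mathcal{M}$; the concluding diagonal (or triangular) expansion only appends slack columns for rows not yet used as pivots, enlarging $\mathcal{M}$ to a triangular set of $p$ columns without removing the basic ones.

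The step I expect to require the most care is making the second ingredient rigorous: the theorem quantifies over interior-point \emph{iterates} near the solution rather than over central-path points, so I would need to combine strict complementarity of the unique primal-dual optimum, convergence of the iterates, and the near-central-path relation $d_i\approx x_i/\sqrt{\mu}$ to conclude that the weight ordering really does separate $\mathcal{B}$ from $\mathcal{N}$ from some iteration onward. A secondary point worth stating explicitly is the convention that ``contains a stopping set'' excludes the empty set --- this is exactly what makes a degree-$1$ row unavoidable for a nonempty basic residual --- together with the routine check that the expansion step of Algorithm~\ref{MTS row} cannot collide with columns already collected.
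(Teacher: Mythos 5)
Your proposal is correct and follows essentially the same route as the paper's proof: use the weight separation from strict complementarity so that near the optimum the $|\mathcal{B}|$ highest-weight columns are exactly the basic ones, invoke Lemma~\ref{Basic set is triangular} to get triangularizability of $A_{\mathcal{B}}$ (and of all its subsets), then conclude that the incremental algorithm accepts every basic column and that the row-wise algorithm can never take its column-elimination branch on a basic column, since the remaining nonzero columns would all be basic and hence, containing no stopping set, would exhibit a degree-$1$ row. Your write-up merely makes explicit the induction for Algorithm~\ref{MTS inc} and the stopping-set/degree-$1$-row contradiction that the paper states more tersely.
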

\begin{proof}
As the interior-point algorithm progresses, the weights $d_i$ corresponding to the basic variables approach $\infty$, while the weights of nonbasic variables approach zero.
Hence, when $\mu$ becomes sufficiently small, the columns corresponding to the basic set, $\mathcal{B}$ will be the $|\mathcal{B}|$ highest-weight columns of $A$, and according to Lemma \ref{Basic set is triangular}, the matrix $A_\mathcal{B}$ consisting of these columns can be made triangular, provided that the solution to $LP^k$ is integral.

In view of this result, the proof of the claim for the incremental greedy algorithm becomes straighforward: The preconditioning set $\mathcal{M}$ continues to grow by one member at each iteration, at least until it includes all the $|\mathcal{B}|$ highest-weight (i.e., basic) columns.

To prove that the triangular set $\mathcal{M}$ given by the row-wise greedy algorithm includes the basic set, as well, it is sufficient to show that none of the basic columns will be erased from $\tilde{A}$ (i.e., become all zeroes) in line 11 of Algorithm \ref{MTS row}. 
% First, note that in line 9, a column $i$ is erased only when column $i$ is added to $\mathcal{M}$. 
% Each column $i$ that is selected in line 11 to be erased has the minimum weight among the remaining nonzero columns of $\tilde{A}$. 
Assume that, at some iteration, a column $i$ is selected in line 11 to be erased. 
Column $i$ has the minimum weight among the nonzero columns currently in $\tilde{A}$. Therefore, if $i$ is a basic column and $\mu$ is small enough, all the other nonzero columns are basic columns, as well, since the basic columns are the $|\mathcal{B}|$ highest-weight columns of $A$. This means that $\tilde{A}$ could be made triangular, without running out of degree-1 rows and having to erase column $i$. So, column $i$ cannot be basic. 
\end{proof}

\begin{remark}
The proof above suggests that Theorem \ref{Optimality of MTS} can be stated in more general terms.
% without restricting it to the basic set, or making any assumption on the interior-point LP problem. 
For any $s \in \{1, \ldots, q\}$, let $\mathcal{S}$ be a set consisting of the $s$ highest-weight columns of $A$. Then, if the set of variable nodes corresponding to $\mathcal{S}$ in the (extended) Tanner graph does not contain a stopping set, that is, $A_\mathcal{S}$ can be made triangular by row and column permutations, then the preconditioning sets found by Algorithms~\ref{MTS inc} and~\ref{MTS row} both contain $S$.
\end{remark}

The assumption that the solution is integral does not hold for all LPs that we solve in adaptive LP decoding. On the other hand, in practice, we are often interested in solving the LP exactly, only when LP decoding finds an integral solution (i.e., the ML codeword). This, of course, does not mean that in such cases every LP subproblem solved in the adaptive method has an integral solution. However, one can argue heuristically that, if the final LP subproblem has an integral solution, the intermediate LPs are also very likely to have an integral solution. To see this, remember from Theorem \ref{properties of ALP} that each intermediate LP problem that is solved in adaptive LP decoding is equivalent to a full LP decoding that uses a subset of the check nodes in the Tanner graph. Now, if LP decoding with the complete Tanner graph has an integral solution, it is natural to expect that, after removing a subset of check nodes, which can also reduce the number of cycles, the LP decoder still very likely to find an integral solution.

\subsection{Performance Simulation}
We simulated the LP decoding of $(3,6)$-regular LDPC codes on the AWGN channel using
the MALP-A algorithm and our sparse implementation of the path-following interior-point method. We have shown earlier that, as interior-point progresses, the matrix $A D^2 A^T$ that needs to be inverted to compute the Newton steps becomes more and more ill-conditioned. We have observed that this problem becomes more severe in the later iterations of the MALP-A algorithm, where the LP problem is larger and more degenerate due to the abundance of active constraints at the optimum of the problem. 

In Figs. \ref{PCGM_int1}-\ref{PCGM_frac2}, we present the performance results of the PCG method for four different systems of linear equations in the form of (\ref{Delta y}), solved in the infeasible primal-dual path-following interior-point algorithm, using the preconditioners designed by greedy Algorithms \ref{MTS inc}-\ref{MTS row}.\footnote{In all the simulations of the Row-wise Greedy Search (Algorithm \ref{MTS row}) that we present in this section, we have used a diagonal expansion, rather than a triangular expansion, as described in Subsection V-A.} In these simulations, we used a randomly-generated $(3,6)$-regular LDPC code of length 2000, where the cycles of length four were removed. The performance of the PCG algorithm is measured by the behavior of the relative residual error ${\|r^i\|_2^2}/{\|w\|_2^2}$, where $r^i$ and $w$ are defined in Algorithm \ref{PCG Algorithm}, as a function of the iteration number $i$ of the PCG algorithm.

In Figs. \ref{PCGM_int1} and \ref{PCGM_int2}, we considered solving (\ref{Delta y}) in two different iterations of the interior-point algorithm for solving an LP problem. This LP problem was selected at the 6th iteration of an MALP decoding problem at SNR = 1.5 dB, and the solution to the LP was \emph{integral}. The constraint matrix $A$ for this LP had 713 rows and 2713 columns, and we used the PCG algorithm to compute the Newton step. Fig. \ref{PCGM_int1} corresponds to finding the Newton step at the $8$th iteration of the interior-point algorithm. In this scenario, the duality gap $g_d = x^T z$ was equal to 48.6, and the condition number $\kappa(Q)$ of the problem was equal to $3.46 \times 10^4$. We have plotted the residual error of the CG method without preconditioning, as well as the PCG method using the three proposed preconditioner designs. For this problem, except during the first 10-15 iterations, the behaviors of the three preconditioned implementations are very similar, and all significantly outperform the CG method.

In Fig. \ref{PCGM_int2}, we solved (\ref{Delta y}) at the $18$th iteration of the same LP, where the interior-point is much closer to the solution, with $g_d = 0.22$ and $\kappa(Q) = 2.33 \times 10^8$. In this problem, the convergence of the CG method is very slow, so that in 200 iterations, the residual error does not get below $0.07$. The PCG method with incremental greedy preconditioning, reaching a residual error of $10^{-4}$ in 40 iterations, has the fastest convergence, followed by the column-wise greedy preconditioner. 

\begin{figure}
\centering
\includegraphics[width=4.5 in] {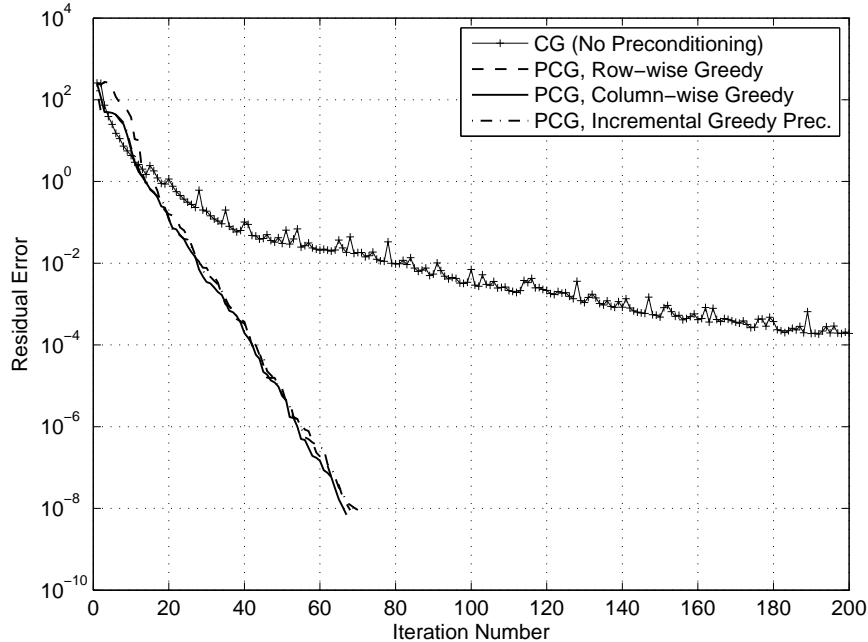}
\caption{The progress of the residual error for different PCG implementations, solving (\ref{Delta y}) in the $8$th iteration of the interior-point algorithm, in an LP with an {\it integral} solution. The constraint matrix $A$ has 830 rows and 3830 columns, $g_d = 48.6$, and $\kappa(Q) = 3.46 \times 10^4$.}
\label{PCGM_int1}
\end{figure}

\begin{figure}
\centering
\includegraphics[width=4.5 in] {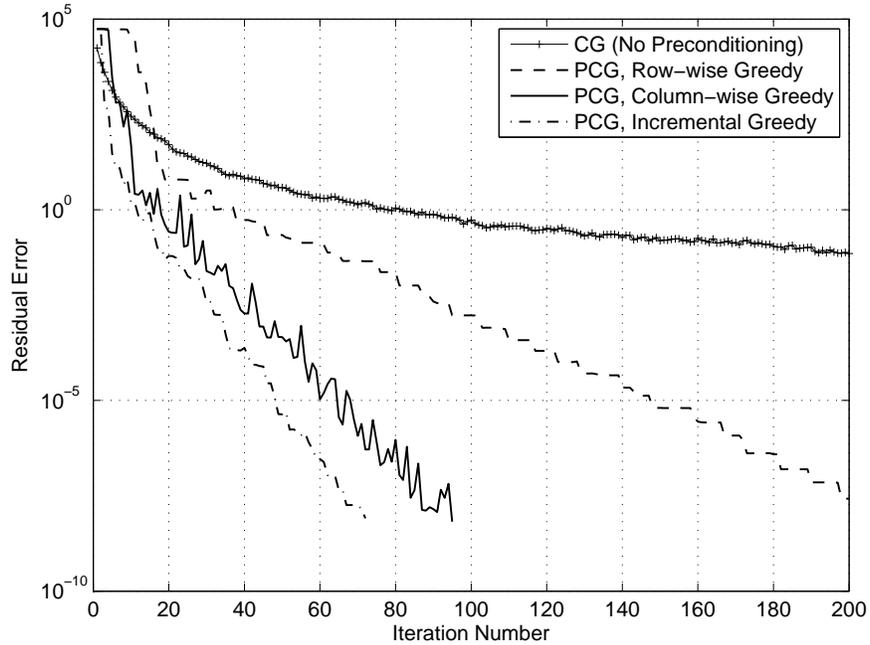}
\caption{The progress of the residual error for different PCG implementations, solving (\ref{Delta y}) in the $8$th iteration of the interior-point algorithm, in an LP with an {\it integral} solution. The constraint matrix $A$ has 830 rows and 3830 columns, $g_d = 0.22$, and $\kappa(Q) = 2.33 \times 10^8$.}
\label{PCGM_int2}
\end{figure}

To study the performance of the algorithms when the LP solution is not integral, we considered an LP from the 6th iteration of an MALP-A decoding problem at SNR = 1.0 dB, where the solution was \emph{fractional}. The matrix $A$ had 830 rows and 3830 columns. Fig. \ref{PCGM_frac1} corresponds to the $8$th iteration of the interior-point algorithm, with $g_d = 46.4$ and $\kappa(Q) = 2.03 \times 10^4$, while Fig. \ref{PCGM_frac2} corresponds to the $18$th (penultimate) iteration, with $g_d = 0.155$ and $\kappa(Q) = 2.61 \times 10^8$. These parameters are chosen such that the scenarios in these two figures are respectively similar to those in Figs. \ref{PCGM_int1} and \ref{PCGM_int2}, the main difference being that the decoding problem now has a fractional solution. We can observe that, while the performance of the CG method is very similar in Fig. \ref{PCGM_int1} and Fig. \ref{PCGM_frac1}, as well as in Fig.~\ref{PCGM_int2} and Fig.~\ref{PCGM_frac2}, the preconditioned implementations have slower convergence when the LP solution is fractional. In particular, in Fig. \ref{PCGM_frac2}, the row-wise greedy preconditioner does not improve the convergence of the CG method, and is essentially ineffective.

\begin{figure}
\centering
\includegraphics[width=4.5 in] {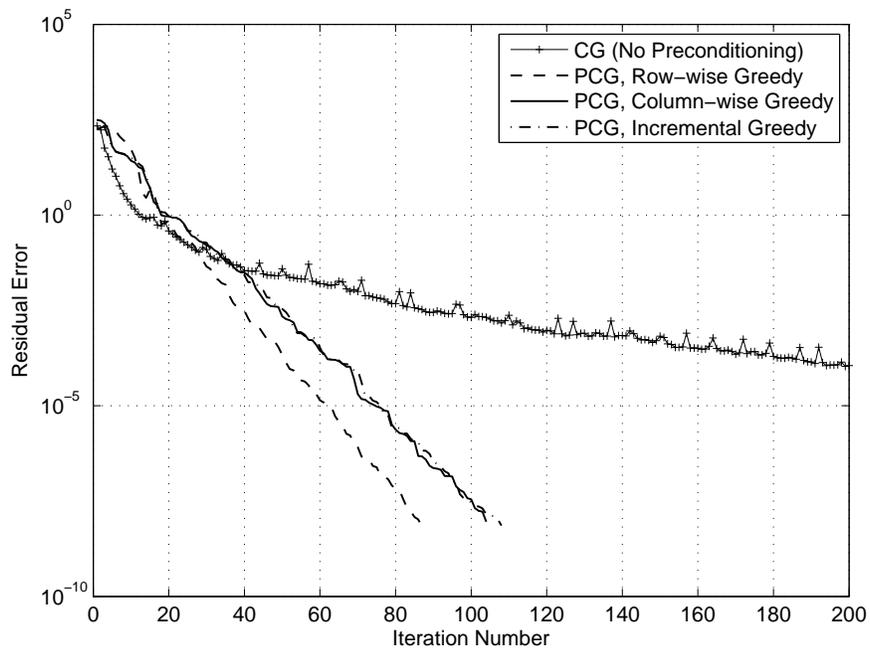}
\caption{The progress of the residual error for different PCG implementations, solving (\ref{Delta y}) in the $8$th iteration of the interior-point algorithm, in an LP with a {\it fractional} solution. The constraint matrix $A$ has 830 rows and 3830 columns, $g_d = 46.4$, and $\kappa(Q) = 2.03 \times 10^4$.}
\label{PCGM_frac1}
\end{figure}

\begin{figure}
\centering
\includegraphics[width=4.5 in] {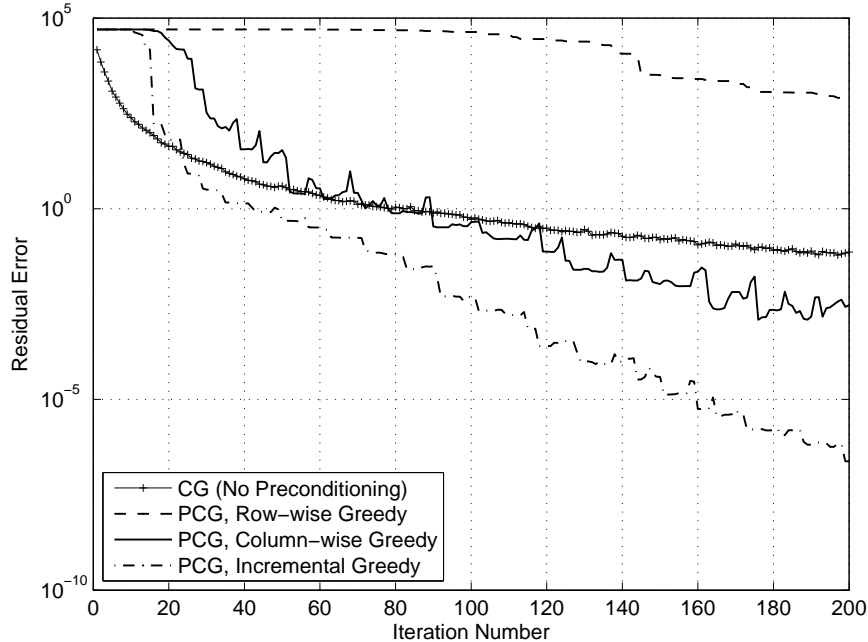}
\caption{The progress of the residual error for different PCG implementations, solving (\ref{Delta y}) in the $8$th iteration of the interior-point algorithm, in an LP with a {\it fractional} solution. The constraint matrix $A$ has 830 rows and 3830 columns, $g_d = 0.155$, and $\kappa(Q) = 2.61 \times 10^8$.}
\label{PCGM_frac2}
\end{figure}

\subsection{Discussion}
Overall, we have observed that in very ill-conditioned problems, the incremental and the column-wise greedy algorithms are significantly more effective than the row-wise greedy algorithm in speeding up the solution of the linear system. The better performance of the column-wise approach relative to the row-wise approach can be explained by the fact that the former, which searches for degree-1 columns, has more choices at each stage, since the columns of $A$ have lower degrees on average than its rows. Besides, while the column-wise is always able to find a complete triangular preconditioning set, the row-wise algorithm needs to expand the preconditioning set at the end by adding some slack columns that may have very low weights. Considering both the complexity and performance, the column-wise search (Algorithm~\ref{MTS col}) seems to be a suitable choice for a practical implemetation of LP decoding.

A second observation that we have made in our simulations is that the convergence of the PCG method cannot be well characterized just by the condition number of the preconditioned matrix. In fact, we have encountered several situations where the preconditioned matrix had a much higher condition number than the original matrix, yet it resulted in a much faster convergence. For instance, in the scenario studied in Fig. \ref{PCGM_frac2}, the condition number of the preconditioned matrix $M^{-1} Q$ for both the column-wise and the incremental algorithms was higher than that of $Q$ by factor of 50--100, while these preconditioners still improved the convergence compared to the CG method. Indeed, it is believed in the literature that the speed of convergence of the CG can typically be better explained by the number of distinct clusters of eigenvalues.

While we studied the interior-point method in the context of MALP decoding, the proposed algorithms can also be applied to the LPs that may have more than one constraint from each check node. For instance, we have observed that the proposed implementation is also very effective for ALP decoding. However, in the absence of the single-constraint property, some of the analytical results we presented may no longer be valid.

\section{Conclusion}
In this paper, we studied various elements in an efficient implementation of LP decoding. We first studied the adaptive LP decoding algorithm and two variations and demonstrated a number of properties of these algorithms. Specifically, we proposed modifications of the ALP decoding algorithm that satisfy the single-constraint property; i.e., each LP to be solved contains at most one parity inequality from each check node of the Tanner graph. 

We later studied a sparse interior-point implementation of linear programming, with the goal of exploiting the properties of the decoding problem in order to achieve lower complexity. The heart of the interior-point algorithm is the computation of the Newton step via solving an (often ill-conditioned) system of linear equations. Since iterative algorithms for solving sparse linear systems, including the conjugate-gradient method, converge slowly when the system is ill-conditioned, we focused on finding a suitable preconditioner to speed up the process. 

Motivated by the properties of LP decoding, we studied a new framework for desiging a preconditioner. Our approach was based on finding a square submatrix of the LP constraint matrix which contains the columns with the highest possible weights, and at the same time, can be made lower- or upper-triangular by column and row permutations, making it invertible in linear time. We proposed a number of greedy algorithms for designing such preconditioners, and proved that, when the solution to the LP is integral, two of these algorithms indeed result in effective preconditioners. We demonstrated the performance of the proposed schemes via simulation, and we observed that the preconditioned systems are most effective when the current LP has an integral solution. 

One can imagine various modifications and alternatives to the proposed greedy algorithms for designing preconditioners. It is also interesting to investigate the possibility of finding other adaptive or nonadaptive formulations of LP decoding that result in solving the fewest/smallest possible number of LPs, while maintaining the single-constraint property.
Moreover, there are several aspects of the implementation of LP decoding that are not explored in this work. These potential areas for future research include the optimum selection of the stopping criteria and step sizes for the interior-point algorithm and the CG method, as well as the theoretical analysis of the effect of preconditioning on the condition number and the eigenvalue spectrum of the linear system, similar to the study done in \cite{Judice} for network flow problems.
%***************************************************%
%***************************************************%
\appendices
\section{Proof of Theorem \ref{properties of ALP}}
\label{Proof of theorem}
\begin{proof} %[Theorem \ref{properties of ALP}]
\begin{enumerate}
\renewcommand{\labelenumi}{\alph{enumi})}
\item To prove the claim, we show that the solution to any linear program $LP^k$ consisting of the $n$ initial (single-sided) box inequalities given by (\ref{initial constraints}) and any number of parity inequalities of the form (\ref{constraints2}) satisfies all the double-sided box constraints of the form $0 \leq u_i \leq 1,\ i \in \mathcal{I}=\{1,\ldots,n\}$.

For simplicity, we first transform each variable $u_i,\ i \in \mathcal{I}$, and its coefficient $\gamma_i$ in the objective function, respectively, into a new variable $v_i$ and a new coefficient $\lambda_i$, where 
\begin{equation}
\label{xform the LP}
\setlength{\nulldelimiterspace}{0pt}
\left\{\begin{IEEEeqnarraybox}[\relax][c]{l's}
v_i=u_i \hspace{0.26in} \text{ and } \lambda_i=\gamma_i \hspace{0.3in} \text{if}\ \ \gamma_i\geq0, \\
v_i=1-u_i \text{ and } \lambda_i=-\gamma_i \hspace{0.2in} \text{if}\ \ \gamma_i<0.
\end{IEEEeqnarraybox}\right.
\end{equation}
By this change of variables, we can rewrite $LP^k$ in terms of $v$. In this equivalent LP, all the variables $v_i$ will have nonnegative coefficients $\lambda_i$ in the objective function, and the box constraints (\ref{initial constraints}) will all be transformed into inequalities of the form $v_i \geq 0$. However, the transformed parity inequalities will still have the form
\begin{equation}
\label{xformed constraints}
\sum_{i\in \mathcal A_j} (1-v_i) + \sum_{i\in \mathcal B_j} v_i \geq 1,
\end{equation}
although here some of the sets $\mathcal A_j$ may have even cardinality. To prove the claim, it suffices to show that the unique solution $v^k$ to this LP satisfies $v^k_i\leq 1,\ \forall\ i\in \mathcal I$.

Assume, on the contrary, that for a subset of indices $\mathcal L \subseteq \mathcal I$, we have $v^k_i>1,\ \forall\ i\in \mathcal L$, and $0\leq v^k\leq 1,\ \forall\ i\in \mathcal I \backslash \mathcal L$. We define a new vector $\tilde v^k$ as
\begin{equation}
\label{tilde v}
\setlength{\nulldelimiterspace}{0pt}
\left\{\begin{IEEEeqnarraybox}[\relax][c]{l's}
\tilde v^k_i=1 \hspace{0.28in} \text{if}\ \ i\in \mathcal L, \\
\tilde v^k_i=v^k_i \hspace{0.2in} \text{if}\ \ i\in \mathcal I \backslash \mathcal L.
\end{IEEEeqnarraybox}\right.
\end{equation}
Remembering that $\lambda_i\geq 0,\ \forall\ i\in\mathcal I$, we will have $\lambda^T \tilde v^k \leq \lambda^T v^k$. Moreover, $\tilde v^k$ clearly satisfies all the double-sided box constraints $0\leq \tilde v^k_i \leq 1,\ \forall\ i\in \mathcal I$. 
We claim that any parity inequality of the form (\ref{xformed constraints}) in the LP, which is by assumption satisfied at $v^k$, is also satisfied at $\tilde v^k$. To see this, note that the first sum in (\ref{xformed constraints}) can only either increase or remain constant by moving from $v^k$ to $\tilde v^k$, and it will be nonnegative at $\tilde v^k$. Moreover, the second sum will remain constant if $\mathcal L \cap \mathcal B_j = \emptyset$, or will decrease but remain greater than or equal to one if $\mathcal L \cap \mathcal B_j \neq \emptyset$. In both cases, inequality (\ref{xformed constraints}) will be satisfied at $\tilde v^k$. Hence, we have shown that there is a feasible point $\tilde v^k$ which has a cost smaller than or equal to that of $v^k$. This contradicts the assumption that $v^k$ is the unique solution to the LP. Consequently, the solution to the LP should satisfy all the double-sided box constraints.

\item We need to show that $\gamma^T u^k < \gamma^T u^{k+1}$ for any $0\leq k< K$. This is obvious for ALP decoding, as the feasible set of $LP^k$ contains the feasible set of $LP^{k+1}$. For MALP-A and MALP-B, let ${LP}^{*k}$ be the problem obtained by removing from $LP^k$ a subset (or all) of the parity inequalities that are inactive at its solution, $u^k$. As discussed earlier, these inactive inequalities are non-binding, so the solution to ${LP}^{*k}$ must be $u^k$, as well. Now, $LP^{k+1}$ is obtained by adding some new (violated) constraints to ${LP}^{*k}$. Hence, the feasible set of ${LP}^{*k}$ strictly contains that of $LP^{k+1}$, which yields $\gamma^T u^k < \gamma^T u^{k+1}$.

\item Similar to the proof of \cite[Theorem 2]{ALP ISIT}. 

\item Similar to part b), let $LP^{*k}$ be the LP problem obtained by removing from $LP^k$ \emph{all} of the parity inequalities that are inactive at $u^k$, and remember that $u^k$ is the solution to $LP^{*k}$, as well. Clearly, all the parity inequalities in $LP^{*k}$ are from check nodes with indices in $\mathcal{J}^k$, thus the feasible space of $LP^{*k}$ contains that of $LPD^k$. Hence, it remains to show that $u^k$, the optimum feasible point for $LP^{*k}$, is also in the  feasible space of $LPD^k$.

Let $I^k \subseteq \{1, \ldots, n\}$ be the set of indices of variable nodes that are involved in at least one of the parity inequalities in $LP^{*k}$ (or, equivalently, check nodes in $\mathcal{J}^k$), and let $\tilde{I}^k$ be the set of the remaining indices. According to Corollary~\ref{one cut cor}, all the parity inequalities from check nodes in $\mathcal{J}^k$ are satisfied at $u^k$.  In addition, we can conclude from Corollary~\ref{redundant box} that the box constraints for variables with indices in $I^k$ are satisfied, as well.

Now, for any $i\in \tilde{I}^k$, the variable $u_i$ will be decoupled from all other variables, since it is only constrained by a box constraint according to (\ref{initial constraints}). Hence, in the solution $u^k$, such a variable will take the value $u^k_i=0$ if $\gamma_i > 0$ or $u^k_i=1$ if $\gamma_i<0$.\footnote{We assume that $\gamma_i \neq 0$, since otherwise, $u^k_i$ will not have a unique optimum value, which contradicts the uniqueness assumption on $u^k$ in the theorem.} Consequently,  $u^k$ satisfies all the parity inequalities and box constraints of $LPD^k$, and hence is the solution to this LP decoding problem.

\end{enumerate}
\end{proof}

\section*{Acknowledgment}
This work is supported in part by NSF Grant CCF-0829865.

\end{document}